\def\BibTeX{{\rm B\kern-.05em{\sc i\kern-.025em b}\kern-.08em
    T\kern-.1667em\lower.7ex\hbox{E}\kern-.125emX}}
\newtheorem{corollary}{Corollary}
\newtheorem{remark}{Remark}
\newtheorem{assumption}{Assumption}
\newtheorem{example}{Example}
\newtheorem{definition}{Definition}
\newcommand{\E}[1]{\mathbb{E}\left\{ #1 \right\}} 
\DeclareMathOperator*{\argmax}{arg\,max}    
\DeclareMathOperator{\diag}{diag}
\newcommand{\abs}[1]{\left\lvert#1\right\rvert} 
\newcommand{\ltwonorm}[1]{\left\lVert#1\right\rVert_2} 
\newcommand{\chest}{\mathrm{chest}} 
\newcommand{\comm}{\mathrm{comm}} 
\newcommand{\prob}{\mathrm{prob}} 
\newcommand{\refl}{\mathrm{refl}} 
\pgfplotsset{
    compat=newest,
    legend style={
        font=\scriptsize,
        fill opacity=0.5,
        draw opacity=1,
        text opacity=1,
        draw=white!15!black,
        legend cell align=left,
        align=left
    }, 
    width=6.5cm,     
    yminorticks=false,
    xminorticks=false,
    title style={
        font=\small
    },
    label style={
        font=\footnotesize
    },
    tick style={color=black},
    tick label style={
        font=\scriptsize
    },
    grid style={
        dotted,
        draw=gray
    },
    major grid style={
        dotted,
        draw=gray
    }
}
\newacronym{5g}{5G}{fifth generation}
\newacronym{3d}{3D}{three dimensional}
\newacronym{aoa}{AoA}{angle of arrival}
\newacronym{awgn}{AWGN}{additive white Gaussian noise}
\newacronym{aod}{AoD}{angle of departure}
\newacronym{ap}{AP}{access point}
\newacronym{b5g}{B5G}{Beyond-5G}
\newacronym[plural=BSs, firstplural=base stations (BSs)]{bs}{BS}{base station}
\newacronym{cc}{CC}{control channel}
\newacronym{comm}{COMM}{communication}
\newacronym{chest}{CHEST}{channel estimation}
\newacronym{csi}{CSI}{channel state information}
\newacronym{cdf}{cdf}{cumulative distribution function}
\newacronym{dc}{DC}{direct current}
\newacronym{dsp}{DSP}{digital signal processing}
\newacronym{dl}{DL}{downlink}
\newacronym{doa}{DoA}{direction-of-arrival}
\newacronym{emf}{EMF}{electromagnetic field}
\newacronym{em}{EM}{electromagnetic}
\newacronym{fp}{FP}{fractional program}
\newacronym{glrt}{GLRT}{generalized likelihood ratio test}
\newacronym{hris}{HRIS}{hybrid \gls{ris}}
\newacronym{iid}{i.i.d.}{independent and identically distributed}
\newacronym{ios}{IoS}{Internet-of-Surfaces}
\newacronym{iot}{IoT}{Internet-of-Things}
\newacronym[plural=KPIs, firstplural=key performance indicators (KPIs)]{kpi}{KPI}{key performance indicator}
\newacronym{ls}{LS}{least-squares}
\newacronym{lf}{LF}{low frequency}
\newacronym{los}{LoS}{line-of-sight}
\newacronym{mac}{MAC}{medium access control}
\newacronym{mimo}{MIMO}{multiple-input multiple-output}
\newacronym{mmimo}{mMIMO}{massive multiple-input multiple-output}
\newacronym{miso}{MISO}{multiple-input single-output}
\newacronym{ml}{ML}{machine learning}
\newacronym{mle}{MLE}{maximum-likelihood estimator}
\newacronym{mmse}{MMSE}{minimum mean squared error}
\newacronym{mrc}{MRC}{maximum-ratio combining}
\newacronym{mse}{MSE}{mean-squared error}
\newacronym{nmse}{NMSE}{normalized mean-squared error}
\newacronym{nlos}{NLoS}{non-line-of-sight}
\newacronym{ofdm}{OFDM}{orthogonal frequency division multiplexing}
\newacronym{pdf}{PDF}{probability distribution function}
\newacronym{pla}{PLA}{planar linear array}
\newacronym{pap}{P\&P}{plug-and-play}
\newacronym{ppp}{PPP}{Poisson point process}
\newacronym{phy}{PHY}{physical}
\newacronym{ris}{RIS}{reconfigurable intelligent surface}
\newacronym[firstplural=radio frequencies (RFs)]{rf}{RF}{radio-frequency}
\newacronym{rmse}{RMSE}{root-mean-square error}
\newacronym{rss}{RSS}{received signal strength}
\newacronym{se}{SE}{spectral efficiency}
\newacronym{sdp}{SDP}{semidefinite programming}
\newacronym{sdr}{SDR}{semidefinite relaxation}
\newacronym{sinr}{SINR}{signal-to-interference-plus-noise ratio}
\newacronym{smse}{SMSE}{sum mean squared error}
\newacronym{sdma}{SDMA}{space-division multiple-access}
\newacronym{snr}{SNR}{signal-to-noise ratio}
\newacronym{soa}{SoA}{state-of-the-art}
\newacronym{sre}{SRE}{smart radio environment}
\newacronym{sir}{SIR}{signal-to-interference ratio}
\newacronym{toa}{ToA}{time-of-arrival}
\newacronym[plural=UEs, firstplural=users' equipment (UEs)]{ue}{UE}{user equipment}
\newacronym{ul}{UL}{uplink}
\newacronym{ula}{ULA}{uniform linear array}
\newacronym{upa}{UPA}{uniform planar array}
\newacronym{uatf}{UatF}{use-and-then-forget}
\newacronym{tdd}{TDD}{time-division duplex}
\newacronym{zf}{ZF}{zero-forcing}
\newacronym{mr}{MR}{maximum-ratio}
\newacronym{wrt}{w.r.t.}{with respect to}
\newacronym{sdn}{SND}{software defined network}
\newacronym{oran}{ORAN}{open RAN}
\newacronym{6g}{6G}{sixth generation}
\newacronym{pd}{PD}{power detector}
\definecolor{gold}{rgb}{0.85,.66,0}
\definecolor{amaranth}{rgb}{0.9, 0.17, 0.31}
\begin{document}

\title{
    Autonomous RISs and Oblivious Base Stations: \\ The Observer Effect and its Mitigation
}

\author{
    {Victor~Croisfelt},~\IEEEmembership{Member,~IEEE,}
    {Francesco~Devoti},~\IEEEmembership{Member,~IEEE,}
    {Fabio~Saggese},~\IEEEmembership{Member,~IEEE,}
    {Vincenzo~Sciancalepore},~\IEEEmembership{Senior Member,~IEEE,}
    {Xavier~Costa-P\'erez},~\IEEEmembership{Senior Member,~IEEE,} and
    {Petar~Popovski},~\IEEEmembership{Fellow,~IEEE}
    \thanks{
    V. Croisfelt, F. Saggese, and P. Popovski are with Aalborg Universitet, 9220 Aalborg, Denmark. 
    F. Devoti and V. Sciancalepore are with NEC Laboratories Europe, 69115 Heidelberg, Germany.
    X. Costa-P\'erez is with i2cat, ICREA, and NEC Laboratories Europe, 08034 Barcelona, Spain. 
    This work was supported by the Villum Investigator grant "WATER" from the Velux Foundation, Denmark, by the EU H2020 RISE-6G project under grant agreement no. 101017011, and by the EU SNS JU INSTINCT project under grant agreement no. 101139161. Corresponding author email: \texttt{vcr@es.aau.dk}.
    }%
}



\maketitle

\begin{abstract}
    Autonomous \glspl{ris} offer the potential to simplify deployment by reducing the need for real-time remote control between a \gls{bs} and an \gls{ris}. However, we highlight two major challenges posed by \emph{autonomy}. The first is \emph{implementation complexity}, as autonomy requires \glspl{hris} equipped with additional onboard hardware to monitor the propagation environment and perform local \gls{chest}, a process known as probing. The second challenge, termed \emph{probe distortion}, reflects a form of the observer effect: during probing, an \gls{hris} can inadvertently alter the propagation environment, potentially disrupting the operations of other communicating devices sharing the environment. Although implementation complexity has been extensively studied, probe distortion remains largely unexplored. To further assess the potential of autonomous \glspl{ris}, this paper comprehensively and pragmatically studies the fundamental trade-offs posed by these challenges collectively. In particular, we examine the \emph{robustness} of an \gls{hris}-assisted \gls{mmimo} system by considering its critical components and stringent conditions. The latter include: (a) two extremes of implementation complexity, represented by minimalist operation designs of two distinct \gls{hris} hardware architectures, and (b) an \emph{oblivious} \gls{bs} that fully embraces probe distortion. To make our analysis possible, we propose a \emph{physical-layer orchestration framework} that aligns \gls{hris} and \gls{mmimo} operations. We present empirical evidence that autonomous \glspl{ris} remain promising under stringent conditions and outline research directions to deepen probe distortion understanding.
\end{abstract}
\begin{IEEEkeywords}
    Reconfigurable intelligent surface (RIS), intelligent reflective surface (IRS), hybrid reconfigurable intelligent surface (HRIS), massive multiple-input multiple-output (MIMO).
\end{IEEEkeywords}

\glsresetall 

\bstctlcite{IEEEexample:BSTcontrol} 

\glsunset{5g}
\glsunset{6g}
\glsunset{ris}

\section{Introduction}\label{sec:intro}
\noindent
\IEEEPARstart{R}{econfigurable} intelligent surfaces (\glspl{ris}) are an emerging technology with a significant role on the research agenda toward the \gls{6g}~\cite{Yang2016,DiRenzo2020,Pan2021}. An \gls{ris} consists of a grid of programmable elements that can dynamically control the reflection properties of incoming electromagnetic waves by adjusting the phase shifts of individual elements, collectively termed as a \textit{configuration}~\cite{Yang2016}. This technology envisions smart radio environments where multiple \glspl{ris} are deployed to possibly offer benefits such as enhanced \gls{se} and reduced electromagnetic-field exposure~\cite{DiRenzo2020,Strinati2021}. In this regard, most research has focused on \textit{nearly-passive} or \textit{solely-reflective} \glspl{ris}, which possess minimal hardware for element configuration and external communication~\cite{Yang2016,DiRenzo2020,Pan2021,Strinati2021}. Systems assisted by nearly-passive \glspl{ris} often operate within a centralized, \textit{non-autonomous} framework, where the \glspl{ris} are typically controlled by \glspl{bs} via \textit{dedicated} control channels and \textit{explicit} control signaling. This framework heavily relies on end-to-end \gls{chest} protocols to optimize \gls{ris} operations~\cite{Wang2020:ce}. Thus, achieving efficient real-time remote control poses a significant challenge to their practical implementation~\cite{Bjornson2020a,bjornson2021signalprocessing}. Especially,~\cite{Zappone2021,Saggese2023,Saggese2023:ris-mec} show that establishing and designing dedicated, explicit control can be detrimental to communication performance, leading to reduced \gls{se} gains and increased latencies. Notably, control costs arise from the allocation of physical resources, such as bandwidth and infrastructure, and engineering requirements that introduce control overhead and reliability issues. Control design also adds unnecessary complexity by requiring simultaneous consideration of multiple factors. These issues underscore a common oversight in prior studies, which often downplayed control-related costs and errors by indiscriminately assuming \textit{ideal control} conditions~\cite{Yang2016,DiRenzo2020,Pan2021,Strinati2021}.  

\begin{figure}[!t]
    \centering
    \vspace{0.05in}
    \includegraphics[width=0.99\columnwidth,height=4cm]{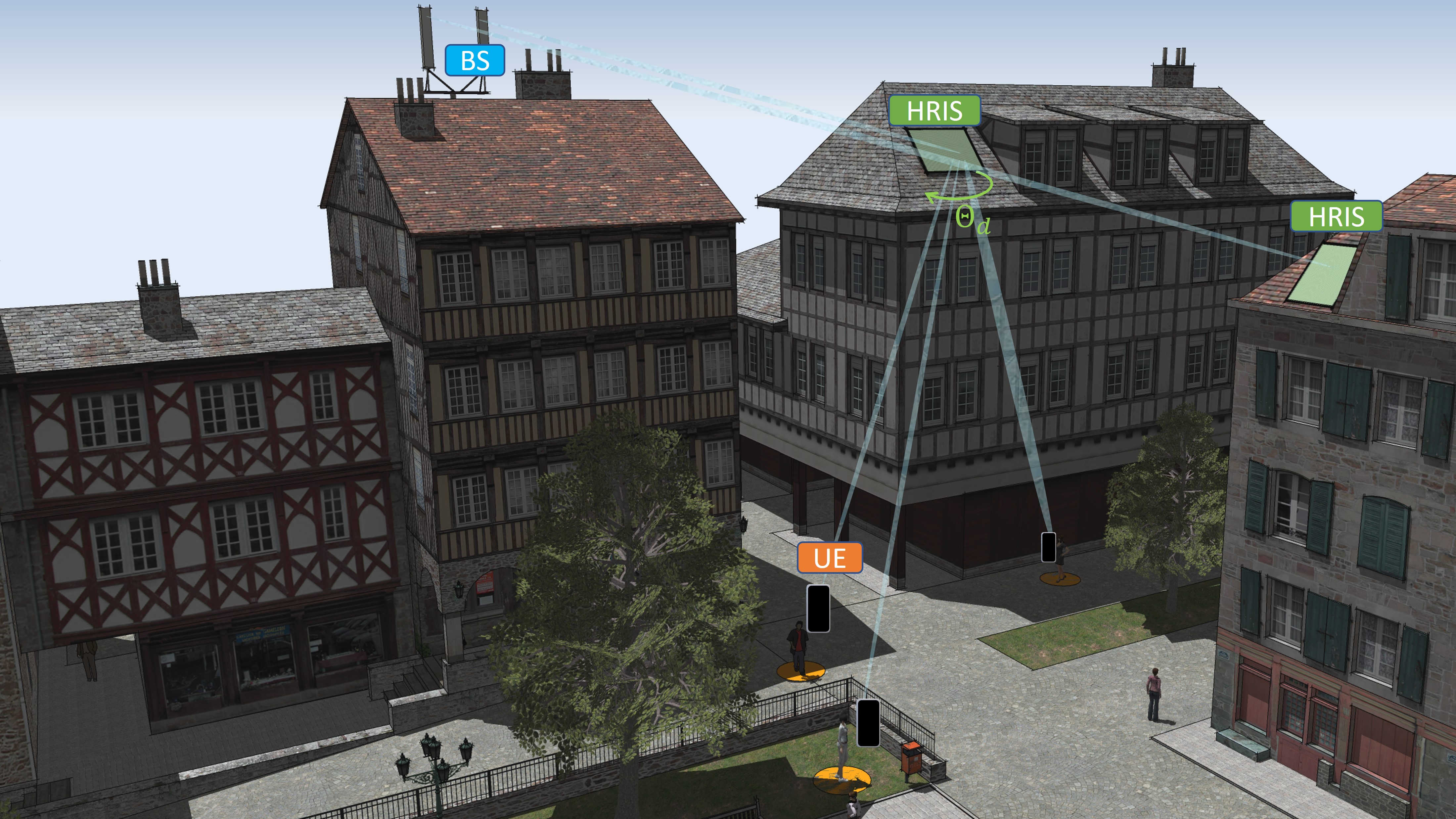}
    \caption{
        Open system models of autonomous RISs allow multiple HRISs to enhance communication performance between BSs and UEs without dedicated and explicit control.
    }
    \vspace{-0.3in}
    \label{fig:intro}
\end{figure}

To obviate the need for real-time remote control, recent works focused on studying decentralized frameworks with \emph{autonomous} \glspl{ris}, which operate independently of \glspl{bs} while bypassing dedicated and explicit control~\cite{Jian2022:hw-channels-survey}. This marks a paradigm shift from the traditional hierarchical \gls{bs}-\gls{ris} control to \textit{open} \gls{ris}-assisted system models, as illustrated in Fig.~\ref{fig:intro}. Inspired by the potentials of this alternative and the current uncertainty surrounding its feasibility~\cite{Subrt2010,alamzadeh2021reconfigurable,alexandropoulos2021hybrid,albanese2021marisa,albanese2024ares}, this paper aims to comprehensively and pragmatically examine the fundamental trade-offs in designing and deploying systems assisted by autonomous \glspl{ris}, focusing on two critical challenges: \textit{implementation complexity} and \textit{probe distortion}. To the best of our knowledge, this paper is the first to address both major challenges within a unified framework. Henceforth, we use the term ``autonomous \gls{ris}'' to refer to the underlying technology and ``\gls{hris}'' for the hardware that implements it. Conversely, ``non-autonomous \gls{ris}'' denotes the standard ``controlled \gls{ris}'' setup, where a \gls{bs} manages a nearly-passive \gls{ris} via dedicated, explicit control.

\subsection{The Two Major Challenges Posed by Autonomy}\label{sec:intro:challenges}

%
\paragraph*{\textbf{1. Implementation complexity}}
From a hardware perspective, autonomy relies on \glspl{hris} equipped with additional onboard hardware to monitor the propagation environment and perform local \gls{chest}, potentially increasing design complexity and associated costs per device. The term ``hybrid'' highlights their non-passive nature, allowing them to sense by simultaneously absorbing and reflecting incoming waves while lacking the capability to transmit signals; thus, positioning them between nearly-passive \glspl{ris} and relays~\cite{Bjornson2020a,Bjornson2020,Jian2022:hw-channels-survey}. Pioneering hardware solutions for \glspl{hris} are presented in~\cite{Subrt2010,alamzadeh2021reconfigurable,alexandropoulos2021hybrid,albanese2021marisa,albanese2024ares}, including additional components such as \gls{rf} chains and computing capabilities to execute \gls{dsp} methods. Notably, these works indicate that a minimal \gls{hris} implementation must alternate between \textbf{two operation modes}. $i)$ \textit{Probe mode:} The \gls{hris} actively probes the environment to detect the \glspl{bs} and \glspl{ue}, followed by a local \gls{chest} procedure of their \gls{csi}. The term ``probe'' highlights the \gls{hris}' active interaction with the propagation environment, distinguishing it from ``sense,'' which would suggest a more passive approach. $ii)$ \textit{Reflection mode:} Leveraging the probing knowledge, the \gls{hris} autonomously self-configures to assist ongoing communication performance.

We consider two \gls{hris} hardware architectures from~\cite{Jian2022:hw-channels-survey}: a low-complexity ``\gls{pd}-enabled'' and a more complex ``\gls{dsp}-enabled'' counterpart. These represent two extremes of implementation complexity concerning \gls{dsp} power. We design their respective probe and reflection modes to leverage their strengths while being mindful of their weaknesses. Notably, the aforementioned designs remain \textit{minimal} (strict), focusing on essential mathematical analysis rather than the ultimate optimization of \gls{hris} operations. Though not explicitly analyzed, minimal designs also promote low latencies in \gls{hris} operation, a highly desired feature.

\paragraph*{{\textbf{2. Probe distortion}}}
We note that autonomous \glspl{ris} can introduce a form of the \emph{observer effect}, a fundamental concept in physics stating that the act of observation inherently disturbs the observed system. In our context, the \gls{hris}' probing actions can alter the channel state, potentially disrupting the operations of other communicating devices sharing the environment. We term this disruption as \textit{probe distortion}, where ``distortion'' is defined as any alteration that modifies a signal's original shape or characteristics, without specifying whether the impact on communication performance is \textit{unfavorable} or \textit{favorable}. To our knowledge, this effect has often been overlooked in the literature, which arises primarily because current technology prevents the \gls{hris} from dynamically and seamlessly switching between fully absorbing and fully reflecting incoming waves~\cite{Subrt2010,alamzadeh2021reconfigurable,albanese2021marisa,albanese2024ares,alexandropoulos2021hybrid}. \textit{In essence, the higher the desired probing performance, the higher the level of probing distortion.}

Of particular importance, if unfavorable, probe distortion can be addressed in two main ways, depending on the \gls{bs}' awareness of the \gls{hris}—where the \gls{bs} often acts as the network coordinator~\cite{DiRenzo2020}. a) \textit{Informed BS:} The \gls{bs} is fully or partly aware of the \gls{hris} operation. Thus, the \gls{bs} can mitigate probe distortion by, for example, adopting a \textit{stop-and-wait} strategy, pausing its operation until probing concludes. This option incurs higher overhead, requiring the \gls{hris} to share information about its operation with the \gls{bs}, or for the \gls{bs} to actively monitor the environment to discern whether disturbances are due to the \gls{hris} or other causes, potentially wasting resources. Additionally, this information may need to be continuously updated due to the possible adaptive nature of the probe mode and the dynamics of the propagation environment. b) \textit{Oblivious BS:} The \gls{bs} is completely unaware of the \gls{hris} operation and executes its tasks carelessly.

We argue that considering an informed \gls{bs} presents a \textit{chicken-and-egg dilemma}, as the primary goal of autonomy is to minimize—\textit{ideally eliminate}—the need for dedicated, explicit control, much like the oblivious \gls{bs} scenario. The former also introduces higher complexity in network design and operational management, leading to higher resource consumption. Hence, we consider an oblivious \gls{bs} scenario—a highly stringent condition where the \gls{bs} fully embraces probe distortion, with no dedicated, explicit control over the \gls{hris}. While this scenario may not represent a definitive practical implementation, analyzing it is essential for risk assessment, providing insights into the consequences of completely lacking control upon a \gls{ris}, with significant academic and industrial implications. From an industrial point-of-view, an oblivious \gls{bs} means that no changes in a currently deployed \gls{bs} are needed to deploy an \gls{hris}. While our discussion focuses on the \gls{bs}' awareness, note that \glspl{ue} can also experience probe distortion—being its proper evaluation beyond our scope. For example, in carrier-sensing random access~\cite{Croisfelt2022,Croisfelt2023}, \glspl{ue} evaluate their channel qualities, which may be impacted by probe distortion, making them more likely to be ``oblivious'' due to their resource scarcity.

\subsection{Why Do We Need a PHY-Layer Orchestration Framework?}\label{sec:intro:motivation}
\noindent
Building on the \gls{5g} standard~\cite{3gpp_nr}, we consider an \gls{hris} assisting a \gls{mmimo} system with an oblivious \gls{bs}. Typically, an \gls{mmimo} system works in \gls{tdd} mode to limit the \gls{csi} acquisition overhead~\cite{massivemimobook}. This mode organizes the time-frequency resources in \textit{coherence blocks}, within which the channel remains time-invariant and frequency-flat. Each coherence block ranges from hundreds to several thousands of complex-valued samples, or \textit{samples} for short, depending on the physical characteristics of the propagation environment. The \gls{tdd} mode sequentially divides each coherence block into \textbf{two operation phases}~\cite{massivemimobook}. $1$) \textit{\Gls{chest} phase:} The \glspl{ue} transmit \gls{ul} pilot signals, or \textit{pilots} for short, to enable the \gls{bs} to perform \gls{chest} and obtain \textit{instantaneous} \gls{csi}. We often omit the \gls{ul} prefix from pilot-related quantities when it is clear from context. Due to channel reciprocity, the estimated \gls{csi} at the \gls{bs} side applies to both \gls{dl} and \gls{ul} directions. $2$) \textit{\Gls{comm} phase:} By using the estimated \gls{csi}, the \gls{bs} can compute \textit{spatial multiplexing} techniques (transmit precoding and receiver combining schemes). For simplicity, we assume that these computations do not incur any overhead. This phase comprises the transmission of \gls{dl} and \gls{ul} payload data while the \gls{bs} spatially separates the \glspl{ue}.

While the system operates through these phases within a given coherence block, the \gls{hris} must autonomously alternate between its two operation modes. This is where a \gls{phy}-layer orchestration framework comes into play. Such a framework must outline: (a) how the \gls{hris} operation modes are aligned with the simultaneous \gls{mmimo} operation phases; and (b) how an intelligent controller acting upon the \gls{hris} can assess the operation modes. To simplify the discussion, we often omit mentioning \gls{hris} and \gls{mmimo} about operation modes and phases, respectively, as the word ``mode'' always refers to \gls{hris} and ``phase'' to \gls{mmimo}.

\subsection{Contributions}\label{sec:intro:contributions}
\noindent
Our initial contribution is a \gls{phy}-layer orchestration framework, which builds the foundation for comprehensively and pragmatically investigating the following trade-offs concerning the above-stated autonomy challenges.\\ 
\indent (1) \textit{Implementation complexity trade-off:} We aim to understand how the overall \gls{hris} performance correlates with the two implementation complexity extremes, characterized by the \gls{pd}- and \gls{dsp}-enabled hardware architectures. Here, ``overall'' refers to both probing and reflecting performance. We also note that each hardware architecture operates differently, resulting in probe distortion with distinct characteristics.\\
\indent (2) \textit{Autonomous \gls{ris} trade-off:} We aim to evaluate the effects of both implementation complexity and probe distortion on the communication performance of an \gls{mmimo} system. Specifically, probe distortion gives rise to the \textit{autonomy paradox}, which suggests that the communication performance of an \gls{hris}-assisted \gls{mmimo} system can be worse than that of an equivalent, \textit{standalone} \gls{mmimo} system. This can occur because probe distortion can hinder spatial multiplexing at an oblivious \gls{bs}, as it relies on \gls{csi} affected by the probing distortion; while efforts to mitigate probe distortion can reduce reflecting performance, as the reflecting performance is inherently linked to the probing one (the output of the probe mode is the input of reflect mode, forming a cascaded system). Thus, our goal is to assess whether probe distortion is \textit{unfavorable} to communication performance. We refer to instances where \gls{hris}-assisted communication performance exceeds that of a standalone \gls{mmimo} as the \textit{robust feasibility region}.

We stress that our aim is \textit{not} to provide ultimate optimal design choices; rather, through minimalist designs and the consequent simplified mathematical analysis, we seek to comprehensively and pragmatically uncover the fundamental scaling rules of these trade-offs. Notably, we highlight that the degree of implementation complexity will be controlled by changing between the two hardware architectures. And, the level of probe distortion can be managed by adjusting the (relative) duration of the probe mode, and it also varies according to the hardware architecture.

Our numerical simulations show that \gls{hris}-assisted communication performance can outperform the standalone performance for a typical suburban setting with \glspl{ue} in cell-edge conditions. Intriguingly, probe distortion is observed to be \textit{dual} in the ability to be favorable or unfavorable to communication performance. This provides empirical evidence that autonomous \glspl{ris} can be a promising alternative for practical \glspl{ris} deployment, even under the considered stringent conditions; most impressively, completely lacking any form of dedicated and explicit control. 

\subsection{Paper Outline}
\noindent
Section~\ref{sec:related-work} reviews related work, while Section~\ref{sec:system-model} outlines the \gls{hris}-assisted \gls{mmimo} system model. In Section~\ref {sec:framework}, we introduce our orchestration framework. Section~\ref{sec:hris} and~\ref{sec:mmimo} detail the \gls{hris} operation modes and the \gls{mmimo} operation phases, respectively. Experiments and discussion are provided in Section~\ref{sec:results}, followed by the conclusions in Section~\ref{sec:conclusions}.

\subsection{Notation}
\noindent
Vectors and matrices are in bold lowercase and uppercase letters, respectively. The $i,j-$th element of a matrix $\mathbf{X}$ is $[\mathbf{X}]_{i,j}$; the $i-$th element of a vector $\mathbf{y}$ is $y_i$. The identity matrix of size $N$ is denoted as $\mathbf{I}_N$ while the vector or matrix of zeroes is $\mathbf{0}$, whose dimensions are specified by the context. Complex conjugate, transpose, Hermitian transpose, and diagonal matrix operators are denoted as $(\cdot)^*$, $(\cdot)^\transp$, $(\cdot)^\htransp$, and $\diag{(\cdot)}$, respectively. The $\ell_2-$norm is denoted as $\ltwonorm{\cdot}$, and, when convenient, the inner product between $\mathbf{x}$ and $\mathbf{y}$ is $\langle \mathbf{x}, \mathbf{y} \rangle$ while $\circ$ denotes the Hadamard product. Integer sets are represented by calligraphic letters, \textit{e.g.}, $\mathcal{A}$ with cardinality $|\mathcal{A}|\!=\!A$, whereas $\mathbb{N}$, $\mathbb{R}$ and $\mathbb{C}$ denote the sets of natural, real, and complex numbers, respectively. The operators $\Re(\cdot)$ and $\Im(\cdot)$ respectively return the real and the imaginary part of a number. The conditional \gls{pdf} is given by $p(x;E)$, for a random variable $x$ given an event $E$. The exponential distribution with parameter $\zeta$ is $\mathrm{Exp} (\zeta)$. The right-tail distribution of a central $\chi^2_n-$distributed random variable $x$ with $n$ degrees of freedom is $Q_{\chi^{2}_{n}}(x)$ while $Q_{\chi^{2}_{n}(\mu)}(x)$ represents a non-central one with non-centrality parameter $\mu$. The complex Gaussian distribution with mean $\mu$ and variance $\sigma^2$ is denoted as $\mathcal{CN}(\mu,\sigma^2)$. We use $\mathcal{O}(\cdot)$ for big-O notation. For clarity, we use the word ``channel'' to refer to channel vectors or matrices of channel responses. Other less frequent notations are clarified when needed.

\section{Related Work}\label{sec:related-work}
\noindent
Despite improvements in communication performance and innovative applications~\cite{Luo2021,he2021reconfigurable,albanese2022ris,wei2023wireless,li2023ris}, \glspl{ris} present significant challenges mainly related to their integration into network architecture, such as the execution of end-to-end \gls{chest}~\cite{DiRenzo2020,Pan2021,Strinati2021}. Methods to integrate non-autonomous \glspl{ris} into the network architecture have been introduced in the literature over the last few years; \textit{e.g.},~\cite{liaskos2018using} proposes a software-defined network approach, while~\cite{liaskos2020end} exploits machine learning in a similar setting. Technical challenges are further discussed in~\cite{liaskos2022software, strinati2021reconfigurable}. Notably, initial standardization efforts are underway to incorporate \glspl{ris} into \gls{6g} standards~\cite{ETSI-GR-RIS-001,ETSI-GR-RIS-003}, requesting further validations. While autonomous \glspl{ris} may reduce the need for network integration, our work focuses primarily on \gls{phy}-layer aspects and does not specifically address this issue.

The passive nature of non-autonomous \glspl{ris} complicates the end-to-end \gls{chest} process~\cite{DiRenzo2020,Pan2021,wei2021channel2}. This is further worsened by the large number of \gls{ris} elements, which increases the complexity of \gls{chest} and raises control overhead, eventually reducing the quality of the acquired \gls{csi}; that is, unfavorably leading to imperfect and/or outdated \gls{csi}~\cite{9366805}. Several distinct \gls{chest} procedures have been proposed for non-autonomous \gls{ris}-assisted systems~\cite{chen2023channel,zhang2023channel,fernandes2023channel,shen2023deep}. In~\cite{Ahmed2021,Zhi2022,Zhi2022a}, the performance of \gls{ris}-assisted \gls{mmimo} systems is analyzed under imperfect \gls{csi}, examining various precoding and combining techniques, as well as methods for optimizing \gls{ris} configurations using either instantaneous or statistical \gls{csi}. In~\cite{Hu2022}, the authors study the case of mobile \glspl{ue} with outdated \gls{csi}. \textit{However, in all these works, it is assumed that the \gls{bs} controls the \gls{ris} with negligible overhead and idealized precision.} This assumption is overoptimistic since it overlooks the challenges of designing a dedicated control channel and its potentially harmful effects on communication performance, as shown in~\cite{Zappone2021,Saggese2023,Saggese2023:ris-mec}.

Autonomous \glspl{ris} partially address the end-to-end \gls{chest} issue by redistributing the \gls{chest} tasks between the \gls{bs} and the \gls{hris}. This approach imposes additional costs on the \gls{hris} to reliably receive and process signals for local \gls{chest}~\cite{alamzadeh2021reconfigurable}, yet it shows significant potential, as motivated by~\cite{albanese2021marisa,albanese2024ares,alexandropoulos2021hybrid,schroeder2021passive}, leading to studies on local \gls{chest} procedures. For example,~\cite{taha2021enabling} employs a compressive sensing approach for local \gls{chest} relying only on a subset of \gls{hris} elements, while~\cite{schroeder2022channel} exploits \gls{ul} pilots for local \gls{chest}. However, their focus is on enhancing local \gls{csi} quality, overlooking other aspects. The closest works to ours are~\cite{albanese2021marisa,albanese2024ares,saigre2022self}, in which the \gls{hris} self-optimizes to assist ongoing communication. However, these works only consider the \gls{comm} phase assuming prior \gls{csi} knowledge, overlooking the effects of probe distortion. Our work provides a more comprehensive analysis encompassing both \gls{chest} and \gls{comm} aspects.

\section{System Model}\label{sec:system-model}
\noindent
Consider a single-cell \gls{mmimo} system where {an oblivious \gls{bs}} equipped with a \gls{ula} of $M$ antennas simultaneously serving $K$ single-antenna \glspl{ue} that are already scheduled, often referred to as \textit{scheduled \glspl{ue}} if the context demands.\footnote{
    Scheduling \glspl{ue} in the presence of an \gls{hris} is related to the \gls{ris}-assisted initial access problem~\cite{Croisfelt2022,Croisfelt2023}, and it is out of the scope of this paper. However, during scheduling, we ensure that \glspl{ue} have strong enough channels to the \gls{bs} so they can still be spatially separable. If \glspl{ue} were served only with the assistance of the \gls{hris}, spatial separability would be compromised since the channels become linearly dependent. This issue is a well-known problem in \gls{ris}-assisted \gls{mmimo} systems, \emph{e.g.}, see~\cite{Ahmed2021,Zhi2022,Zhi2022a}.
    \label{foot:active}
} We denote as $K_{\max}\!\geq\! K$ the maximum number of \glspl{ue} that can be supported by the system. Based on the Plug\&Play approach from~\cite{albanese2021marisa,albanese2024ares}, an \gls{hris} is deployed to autonomously enhance the propagation conditions. The \gls{hris} is comprised of $N\!=\! N_x N_z$ elements that are arranged as a \gls{upa}, where $N_x$ and $N_z$ denote the number of elements along the $x-$ and $z-$axis, respectively. We introduce the sets $\mathcal{M}$, $\mathcal{K}$, and $\mathcal{N}$ to index \gls{bs} antennas, \glspl{ue}, and \gls{hris} elements, respectively. The time-frequency domain is sliced into coherence blocks of $\tau_c$ samples, indexed by the set $\mathcal{T}_c$, where narrow-band wireless transmissions occur at a carrier frequency $f_c$ with wavelength $\lambda$ and bandwidth $B$. Fig.~\ref{fig:geometry} provides a geometric representation of the system.

\subsection{Basic HRIS Operation}\label{sec:hris-model}
\noindent
The \gls{hris} has the sensing capability to both absorb and reflect the incoming waves simultaneously. This can be realized through the use of directional couplers~\cite{alexandropoulos2021hybrid,alamzadeh2021reconfigurable}, whose \textit{coupling parameter} $\eta\!\in\![0,1]$ dictates the \emph{fixed} fraction of the received power from an incoming wave that is reflected into the environment; thus, the \textit{fraction of power absorbed} by the \gls{hris} is $1-\eta$.\footnote{
    We stress that, with the current technology~\cite{alexandropoulos2021hybrid}, the coupling parameter $\eta$ is set by the \gls{hris} hardware design and cannot be tuned dynamically after deployment; but it can be engineered during manufacturing to meet specific requirements of the propagation environment and intended applications.
    \label{foot:eta}
}
The \gls{hris} can alter the propagation environment by changing its configuration. Let $\mathbf{\Theta}\!=\!\mathrm{diag}([e^{j\theta_1},\dots,e^{j\theta_N}]^{\transp})$ be a configuration with $\theta_n\!\in\![0, 2\pi]$ denoting the phase-shift impressed by the $n-$th element. Due to directional couplers, both reflected and absorbed fractions are subject to $\mathbf{\Theta}$. Thus, the equivalent \gls{bs}-\gls{ue} channel for the $k-$th \gls{ue}, ${\mathbf{h}}_k\!\in\!\mathbb{C}^{M}$, is
\begin{equation}
    \mathbf{h}_k(\boldsymbol{\Theta}) = \mathbf{h}_{\mathrm{DR},k} + \mathbf{h}_{\mathrm{RR},k}(\boldsymbol{\Theta}),
    \label{eq:equivalent-channel}
\end{equation}
where $\mathbf{h}_{\mathrm{DR},k}\!\in\!\mathbb{C}^{M}$ is the \textbf{d}i\textbf{r}ect channel and $\mathbf{h}_{\mathrm{RR},k}\!\in\!\mathbb{C}^{M}$ is the \textbf{r}eflected channel, for $k\!\in\!\mathcal{K}$. Note that the reflected channel, $\mathbf{h}_{\mathrm{RR},k}$, and the equivalent channel, $\mathbf{h}_k$, are functions of the configuration, $\boldsymbol{\Theta}$, and can be written in terms of the \gls{hris}-\gls{ue} channel, $\mathbf{r}_k\!\in\!\mathbb{C}^{N}$, and the \gls{bs}-\gls{hris} channel, $\mathbf{G}\!\in\!\mathbb{C}^{M\times N}$. Below, we define the concept of a subblock to help us define how the \gls{hris} operates.

\begin{definition}[Subblock]
    We let a \textit{subblock} be a group of samples within the same coherence block. We denote as $\mathcal{T}\!\subseteq\!\mathcal{T}_{c}$ a subblock. Subblocks are indexed by $s$, which takes values from an index set $\mathcal{S}$ that indexes partitions of samples of size $|\mathcal{T}|$ from $\mathcal{T}_c$. In the special case that a subblock comprises a single sample, we have $s\!\in\!\mathcal{T}_c$ since $|\mathcal{T}|\!=\!1$.
    \label{def:subblock}
\end{definition}
\begin{assumption}[HRIS configuration change] 
    We assume that the \gls{hris} can change its configuration $\boldsymbol{\Theta}$ on a subblock basis. We denote as $\boldsymbol{\Theta}[s]$ the configuration impressed by the \gls{hris} at the $s-$th subblock, for $s\!\in\!\mathcal{S}$. 
    \label{assu:hris-configuration-change}
\end{assumption}

The above assumption aligns with the current technology~\cite{alexandropoulos2021hybrid,alamzadeh2021reconfigurable}. Indeed, an \gls{hris} requires a time ranging from microseconds to milliseconds to change its configuration~\cite{Dai2020}, whose exact value depends on how the \gls{hris} is built and might correspond to the duration of some samples~\cite{Saggese2023}. As a consequence of this assumption, the equivalent channel also changes on a subblock basis and eq.~\eqref{eq:equivalent-channel} can be rewritten as
\begin{equation}
    \mathbf{h}_k[s] = \mathbf{h}_{\mathrm{DR},k} + \mathbf{h}_{\mathrm{RR},k}[s],
    \label{eq:equivalent-channel-over-samples}
\end{equation}
where $\mathbf{h}_{\mathrm{DR},k}$ is not affected by the configuration change. As explained in Section~\ref{sec:intro:challenges}, the \gls{hris} transitions between two operation modes. We will consider that in each mode the \gls{hris} uses different configurations $\boldsymbol{\Theta}$, which are further specified in Section~\ref{sec:hris}. This results in distinct equivalent channels, referred to as the \textit{probing equivalent channel} in probe mode, denoted as ${\mathbf{h}}_{\mathrm{P},k}\!\in\!\mathbb{C}^{M}$, and the \textit{reflecting equivalent channel} in reflection mode, denoted as ${\mathbf{h}}_{\mathrm{R},k}\!\in\!\mathbb{C}^{M}$, for $k\!\in\!\mathcal{K}$.

\begin{figure}[!t]
    \centering
    \vspace{0.5mm}
    \includegraphics[width=8cm]{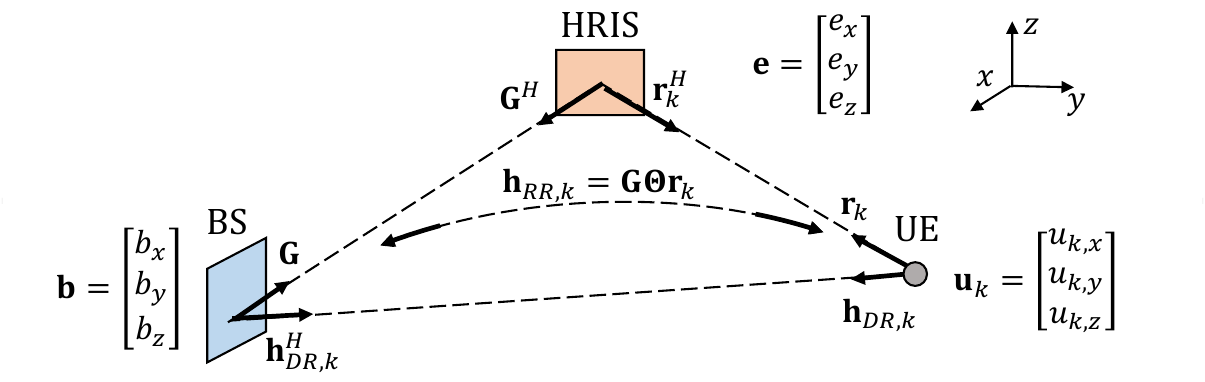}
    \vspace{-6pt}
    \caption{
        Geometric representation of the \gls{hris}-assisted \gls{mmimo} system, illustrating the \gls{bs}, \gls{hris}, and \gls{ue}, with channel notation defined for the \gls{ul} direction.
    }
    \vspace{-6mm}
    \label{fig:geometry}
\end{figure}
%

\subsection{Channel Models}\label{sec:channel-model}
\noindent
We assume a block-fading model~\cite{massivemimobook}. To simplify, we consider a single \gls{ue} $k\in\mathcal{K}$, a single coherence block, and, we also get rid of the $[s]$ notation in this subsection. Denote as $\mathbf{b}\in\mathbb{R}^{3}$, $\mathbf{e} \in \mathbb{R}^{3}$, and $\mathbf{u}_k \in \mathbb{R}^{3}$ the locations of the \gls{bs} center, the \gls{hris} center, and the $k-$th \gls{ue}, respectively. The position of the $m-$th \gls{bs} antenna is $\mathbf{b}_m\in\mathbb{R}^3$, for $ m\in\mathcal{M}$, while of the $n-$th \gls{hris} element is $\mathbf{e}_n\in\mathbb{R}^3$, for $n\in\mathcal{N}$. The inter-antenna and inter-element distances are set to $\lambda/2$. Let $\mathbf{a}_{\rm B}(\mathbf{p})\!\in\!\!\mathbb{C}^{M}$ and $\mathbf{a}_{\rm H}(\mathbf{p})\!\in\!\mathbb{C}^{N}$ denote the respective \gls{bs}' and \gls{hris}' array response vectors toward a generic location $\mathbf{p}\in\mathbb{R}^3$. The $n-$th element of $\mathbf{a}_{\mathrm{H}}(\mathbf{p})$ is~\cite{albanese2021marisa}
\begin{equation}
    [\mathbf{a}_{\rm H}(\mathbf{p})]_{n} = e^{j\langle\mathbf{k}(\mathbf{p},\mathbf{e}) , (\mathbf{e}_n - \mathbf{e})\rangle}, \text{ with } \mathbf{k}(\mathbf{p},\mathbf{e}) = \tfrac{2\pi}{\lambda}\tfrac{\mathbf{p}-\mathbf{e}}{\ltwonorm{\mathbf{e} - \mathbf{p}}}
    \label{eq:array-gain}
\end{equation}
being the wave vector; the vector $\mathbf{a}_\mathrm{B}(\mathbf{p})$ is derived similarly with $\mathbf{b},\mathbf{b}_m$ instead of $\mathbf{e},\mathbf{e}_n$, respectively. Next, the pathloss model between two generic locations $\mathbf{p},\mathbf{q}\!\in\!\mathbb{R}^3$ is~\cite{albanese2021marisa}: $ \gamma(\mathbf{p},\mathbf{q})\! =\! \gamma_0 ( {d_0}/{\ltwonorm{\mathbf{p} - \mathbf{q}}} )^\beta,$ where $\gamma_0$ is the channel power gain at a reference distance $d_0$ and $\beta$ is the pathloss exponent. In particular, we assume that the direct \gls{bs}-\glspl{ue} channels are under a pathloss exponent of $\beta_{\rm B}$, while the \gls{bs}-\gls{hris} and \gls{hris}-\gls{ue} are subject to $\beta_{\rm H}$. This assumption is reasonable, as the \gls{hris} is typically positioned to provide clearer propagation paths to the \gls{bs} and \glspl{ue}, with fewer obstructions compared to the \gls{bs}-\glspl{ue} paths~\cite{Xu2021}.
 
We assume an \gls{iid} Rician fading model for the \gls{bs}-\gls{ue} channel, $\mathbf{h}_{\mathrm{DR},k}$, and for the \gls{hris}-\gls{ue} channel, $\mathbf{r}_k$, while the \gls{bs}-\gls{hris} channel, $\mathbf{G}$, is \gls{los} dominant, and hence deterministic. The latter is valid if the \gls{hris} is deployed to have a strong \gls{los} toward the \gls{bs}, which is often the case due to the flexibility of deployment of the \gls{hris}~\cite{Xu2021}. However, no such assumption is made on the links between the \gls{bs}/\gls{hris} and \glspl{ue}, \textit{e.g.}, due to faster dynamics~\cite{massivemimobook}. Thus, we have
\begin{equation}
    \mathbf{h}_{\mathrm{DR},k}\sim\mathcal{CN}(\bar{\mathbf{h}}_{\mathrm{DR},k},\sigma^2_{{\mathrm{DR}}}\mathbf{I}_{M})\text{ and }\mathbf{r}_k\sim\mathcal{CN}(\bar{\mathbf{r}}_{k},\sigma^2_{{\mathrm{RR}}}\mathbf{I}_{N}),
    \label{eq:ricean-model}
\end{equation}
where $\bar{\mathbf{h}}_{\mathrm{DR},k}$ and $\bar{\mathbf{r}}_{k}$ are the \gls{los} components, while $\sigma^2_{\rm DR}$ and $\sigma^2_{\rm RR}$ are the relative powers of the \gls{nlos} components for the \gls{bs}-\gls{ue} and \gls{hris}-\gls{ue} channels, respectively. Based on the above, the \gls{los} components are $ \bar{\mathbf{h}}_{\mathrm{DR},k}\!=\!\sqrt{\gamma (\mathbf{b},\mathbf{u}_k)} \mathbf{a}_{\rm B}{(\mathbf{u}_k)},\,\bar{\mathbf{{r}}}_k\!=\!\sqrt{\gamma(\mathbf{u}_k,\mathbf{e})}\mathbf{a}_{\rm H}(\mathbf{u}_k), \text{ and } \mathbf{G}\!=\!\sqrt{\gamma(\mathbf{b},\mathbf{e})}\mathbf{a}_{\rm B}(\mathbf{e})\mathbf{a}_{\rm H}^{\htransp}(\mathbf{b})$. Accordingly, the reflected channel is 
\begin{equation}
    \mathbf{h}_{\mathrm{RR},k}\!=\!(\sqrt{\eta}\mathbf{G}\mathbf{\Theta}\mathbf{{r}}_k)
    \!\sim\! \mathcal{CN}\!\left(\sqrt{\eta}\mathbf{G}\mathbf{\Theta}\bar{\mathbf{r}}_k, \eta\!\,\gamma(\mathbf{b},\mathbf{e}) N \sigma^2_{{\mathrm{RR}}} \mathbf{Q} \right)\!,
    \label{eq:channel:reflected}
\end{equation}
where $\mathbf{Q}=\mathbf{a}_{\rm B}(\mathbf{e})\mathbf{a}_{\rm B}(\mathbf{e})^{\htransp}$ is a covariance matrix with ones in the diagonal and off-diagonal elements capturing the \gls{bs} antenna correlation evaluated at the \gls{hris} center. The equivalent \gls{bs}-\gls{ue} channel can be obtained by substituting eq.~\eqref{eq:channel:reflected} into~\eqref{eq:equivalent-channel-over-samples}.

\section{A PHY-Layer Orchestration Framework}\label{sec:framework}
\noindent
In this section, we present our \gls{phy}-layer orchestration framework. We begin by introducing two design rules that underpin the framework, illustrated in Fig.~\ref{fig:framework-diagram}. Then, we provide a detailed presentation of the proposed framework, including the basic mathematical notation and underlying assumptions.
 
\subsection{The Two Design Rules}
\noindent
As motivated in Section~\ref{sec:intro}, we consider an \gls{hris}-assisted \gls{mmimo} system with an oblivious \gls{bs}, completely lacking dedicated and explicit control between the \gls{bs} and \gls{hris}. On this basis, we allow for \textit{minimal and implicit} control information exchange between the \gls{bs} and the \gls{hris} over \textit{existing} control channels. Specifically, the \gls{hris} can listen to standardized control channels—such as the physical downlink control channel (PDCCH)~\cite{3gpp_nr}—to acquire synchronization and data frame details, allowing it to align its operation modes to the \gls{bs}' \gls{chest} and \gls{comm} operation phases, similarly to a standard \gls{ue}. To effectively benefit from the \gls{hris} deployment, we propose an orchestration framework that pragmatically arranges the concurrent operation modes and phases within a coherence block at the \gls{phy} layer. This framework is structured around \textit{two design rules}, illustrated in Fig.~\ref{fig:framework-diagram} and detailed below.

\begin{figure}[!t]
    \centering
    \includegraphics[width=0.9\columnwidth]{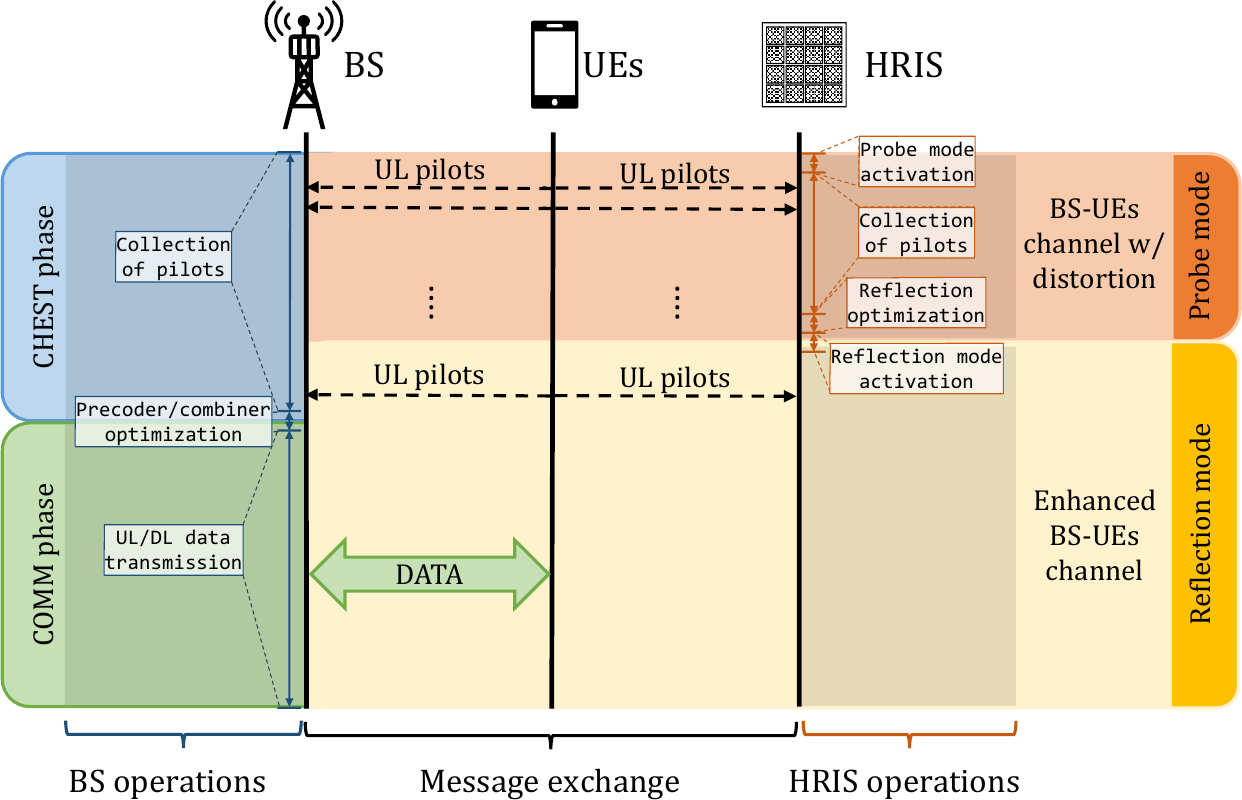}
    \vspace{-0.05in}
    \caption{
        Temporal evolution of the proposed \gls{phy}-layer orchestration framework within a coherence block. The \gls{mmimo} system alternates between two operation phases: $1$) \acrfull{chest} and $2$) \acrfull{comm}; while the \gls{hris} autonomously alternates between two operation modes: $i$) probe and $ii$) reflection.
    }
    \vspace{-0.2in}
    \label{fig:framework-diagram}
\end{figure}
\paragraph*{\textbf{First design rule}}
\textit{The probe mode must take place during the \gls{chest} phase.} This is a natural choice that enables the \gls{hris} to leverage \gls{ul} pilots for identifying scheduled \glspl{ue} and locally estimating their \gls{csi}, as in~\cite{schroeder2022channel}. Effectively, the \gls{hris} can exploit the channel reciprocity inherent from \gls{tdd} operation and, consequently, it can estimate the local \gls{csi} in the \gls{ul} direction only and extrapolate it to the \gls{dl}.\footnote{
    We assume channel reciprocity is perfectly achieved, \emph{e.g.}, by using carefully designed hardware and calibration algorithms~\cite{massivemimobook}, allowing us to focus on discussing our main ideas. Future research could explore what happens if channel reciprocity is violated by/at the \gls{hris}.
} The big downside of this design rule is that probing can alter the channel state during the \gls{chest} phase (observer effect), hence distorting the \gls{csi} estimated by the \gls{bs}. As mentioned, we name this effect as probe distortion. \textit{Therefore, probe distortion manifests as a distortion introduced by the \gls{hris} into the estimated \gls{csi} at the \gls{bs}.} This results in an imperfect, probe-distorted \gls{csi} at the \gls{bs}, which can adversely affect the spatial separation of \glspl{ue} during the \gls{comm} phase, potentially degrading communication performance.

\paragraph*{\textbf{Second design rule}}
\textit{The reflection mode must take place before the end of the \gls{chest} phase and during the entire \gls{comm} phase.} This approach attempts to address a key limitation of autonomous \glspl{ris}, which challenges a foundational principle of the \gls{mmimo} technology: the assumption that \gls{csi} estimated during the \gls{chest} phase remains consistent with the channel state during the \gls{comm} phase~\cite{massivemimobook}. To exemplify this, Fig.~\ref{fig:hris_effect} illustrates the evolution of the power of an equivalent \gls{ul} \gls{bs}-\gls{ue} channel, as defined in~\eqref{eq:equivalent-channel-over-samples}, over a coherence block, reproducing the \gls{hris}' switching between its operation modes. Distinct channel state characteristics are observed: during probe mode, the \textit{probing equivalent channels} can \textit{vary} as the \gls{hris} can alter its configuration to probe for \glspl{ue}. In contrast, during reflection mode, the \textit{reflecting equivalent channel} is \textit{stable} since the \gls{hris} loads and maintains a fixed reflection configuration after finishing probing, which is kept until the next coherence block begins. We assume that the computation of configurations does not incur any overhead. By imposing the start of the reflection mode to occur during the \gls{chest} phase, we aim to enable the \gls{bs} to collect enough samples of the reflecting equivalent channel, but on the effect of probe distortion, attempting to ensure adequate spatial separation of the \glspl{ue} during the \gls{comm} phase.\footnote{
    The design choices we made form \textit{one possible} orchestration framework. Alternative frameworks could be proposed, but we argue that our choices are both natural and well-aligned with \gls{mmimo} technology, providing a basic platform to analyze the relevant trade-offs outlined in Section~\ref{sec:intro:contributions}.
}

\subsection{Detailed Description}\label{sec:framework:detailed}
\noindent
Figure~\ref{fig:frame-design} illustrates how the coherence block is sliced simultaneously into the different operation phases and modes. We let $\tau_\chest$ and $\tau_\comm$ be the number of samples comprising the \gls{chest} and \gls{comm} phases, respectively, such that $\tau_c\!=\!\tau_\chest + \tau_\comm$. The \gls{comm} phase can be further divided into $\tau_d$ and $\tau_u$ samples for \gls{dl} and \gls{ul} data traffic, respectively; that is, $\tau_\comm\!=\!\tau_d + \tau_u$. Simultaneously, we let $\tau_\prob\leq\tau_\chest$ and $\tau_\refl$ be the number of samples comprising the probe and reflection modes, respectively, with $\tau_c\!=\!\tau_\prob + \tau_\refl$. 

We now outline the basic execution of the \gls{chest} phase via \gls{ul} pilot signaling~\cite{massivemimobook}. During connection establishment within a given coherence block, the \gls{bs} performs a pilot assignment $p(i)\!:\!\mathcal{K}\!\mapsto\!\mathcal{T}_p$, where each scheduled \gls{ue} is deterministically assigned a pilot from a total of $\tau_p$ pilots, indexed by the set $\mathcal{T}_p\!\subset\!\mathcal{T}_c$ with $|\mathcal{T}_p|\!=\!\tau_p$, for $i\!\in\!\mathcal{K}$. In other words, $p(i)\!\in\!\mathcal{T}_p$ represents the index of the pilot assigned to \gls{ue} $i$. We say a pilot is \textit{active} if it is assigned to a \gls{ue}; otherwise, it is \textit{inactive}. Note that at most only one \gls{ue} can be associated with each pilot. Each pilot $\boldsymbol{\upphi}_t\!\in\!\mathbb{R}^{\tau_p}$ spans for $\tau_p$ samples, for $t\!\in\!\mathcal{T}_p$. The pilots are selected from a \textit{pilot codebook} $\boldsymbol{\Upphi}\!\in\!\mathbb{R}^{\tau_p\times\tau_p}$. To avoid interference and simplify the analysis, we assume the following about the pilot codebook. 
\begin{assumption}[Orthogonal \gls{ul} pilot codebook]
    The pilot codebook contains mutually orthogonal pilots, such that $\boldsymbol{\upphi}_t^{\htransp} \boldsymbol{\upphi}_{t^\prime}\!=\!{\tau_p}$ if $t\!=\!t^\prime$ and $\boldsymbol{\upphi}_t^{\htransp} \boldsymbol{\upphi}_{t^\prime}\!=\!0$ if
    $t\!\neq\!t^\prime$, $\forall t,t^\prime\!\in\!\mathcal{T}_p$. In particular, we assume $\bm{\Upphi}\!=\! \sqrt{\tau_p} \mathbf{I}_{\tau_p}$ and that the maximum number of \glspl{ue} is equal to the pilot length, \emph{i.e.}, $K_{\max} = \tau_p$.
    \label{assu:ortho-pilots}
\end{assumption}
\begin{figure}[!t]
    \centering
\begin{tikzpicture}

\definecolor{darkorange25512714}{RGB}{255,127,14}
\definecolor{forestgreen4416044}{RGB}{44,160,44}
\definecolor{lavender233}{RGB}{233,233,233}
\definecolor{steelblue31119180}{RGB}{31,119,180}

\def\shift{-0.8cm}
\def\sep{1.2cm}
\def\vside{2.4cm}

\begin{axis}[
    height=0.6in,
    width=2.7in,
    scale only axis,
    tick align=outside,
    tick pos=left,
    legend style={
        draw=none,
        fill=none,
        at={(1.05,1)},
        anchor=south east,
        legend cell align=center,
        align=center
    },
    legend columns=-1,
    xlabel={
        Subblocks, $s$
    },
    xmajorgrids,
    xmin=0, xmax=32,
    xtick style={
        color=black
    },
    xtick={0, 4, 8, 12, 16, 20, 24, 28, 32
    },
    ymajorgrids,
    ylabel={
        \parbox[c]{1in}{
            Equivalent BS-UE\\channel gain [dB]
        }
    },
    ylabel shift={
         -8pt
    },
]
\addplot[black, very thick] 
table { %
    1.0000 -111.6496
    2.0000 -111.6496
    3.0000 -111.6496
    4.0000 -111.6496
    5.0000 -111.6496
    6.0000 -111.6496
    7.0000 -111.6496
    8.0000 -111.6496
    9.0000 -111.6496
    10.0000 -111.6496
    11.0000 -111.6496
    12.0000 -111.6496
    13.0000 -111.6496
    14.0000 -111.6496
    15.0000 -111.6496
    16.0000 -111.6496
    17.0000 -111.6496
    18.0000 -111.6496
    19.0000 -111.6496
    20.0000 -111.6496
    21.0000 -111.6496
    22.0000 -111.6496
    23.0000 -111.6496
    24.0000 -111.6496
    25.0000 -111.6496
    26.0000 -111.6496
    27.0000 -111.6496
    28.0000 -111.6496
    29.0000 -111.6496
    30.0000 -111.6496
    31.0000 -111.6496
    32.0000 -111.6496
};
\addlegendentry{Standalone}

\addplot[red, very thick] 
table { %
    1.0000 -76.5429
    2.0000 -78.8144
    3.0000 -90.4589
    4.0000 -75.9378
    5.0000 -89.1139
    6.0000 -83.5829
    7.0000 -80.9906
    8.0000 -83.9178
    9.0000 -92.9616
    10.0000 -82.5230
    11.0000 -83.0026
    12.0000 -86.8323
    13.0000 -83.8642
    14.0000 -82.1091
    15.0000 -93.2753
    16.0000 -82.5394
};
\addlegendentry{Probing eq. ch., $\mathbf{h}_{\mathrm{P},k}[s]$}

\addplot[blue, very thick] 
table { %
    17.0000 -57.2316
    18.0000 -57.2316
    19.0000 -57.2316
    20.0000 -57.2316
    21.0000 -57.2316
    22.0000 -57.2316
    23.0000 -57.2316
    24.0000 -57.2316
    25.0000 -57.2316
    26.0000 -57.2316
    27.0000 -57.2316
    28.0000 -57.2316
    29.0000 -57.2316
    30.0000 -57.2316
};
\addlegendentry{Reflecting eq. ch., $\mathbf{h}_{\mathrm{R},k}[s]$}

\addplot[only marks, mark=o, mark options={scale=1, color=red}] 
table { %
    16.0000 -82.5394
};

\addplot[only marks, mark=o, mark options={scale=1, color=blue}] 
table { %
    17.0000 -57.2316
};





\end{axis}
\end{tikzpicture}
    \vspace{-6pt}
    \caption{
        Example of the evolution of the equivalent \gls{ul} \gls{bs}-\gls{ue} channel gain, $\mathbf{h}_k[s]$ in~\eqref{eq:equivalent-channel-over-samples}, over a coherence block of $32$ subblocks with the \gls{hris} changing its configuration every subblock. During the probe mode, the channel state of the \textit{probing equivalent channels} can vary significantly whereas the \textit{reflecting equivalent channel} remains stable during the reflection mode. The oblivious \gls{bs} attempts to estimate the reflecting equivalent channel while the \gls{hris} is probing; as a result, \textit{probe distortion} can degrade the quality of the \gls{csi} at the \gls{bs}.
    }
    \label{fig:hris_effect}
    \vspace{-0.2in}
\end{figure}

The above is based on the rule of thumb described in~\cite{massivemimobook} for selecting the number of pilots without interference. We further assume that the \gls{hris} knows $\boldsymbol{\Upphi}$ and, hence, $\tau_p$ and $K_{\max}$, \emph{e.g.}, by listening to the PDCCH~\cite{3gpp_nr}. Due to our design rules and orthogonality, an issue emerges if the duration of the \gls{chest} phase is equal to the pilot length, that is, $\tau_\chest=\tau_p$. To see it, consider the following example.

\begin{example}
    Consider that $K\!=\!K_{\max}\!=\!\tau_p\!=\!2$ and that $\tau_\chest=\tau_p$. Assume that \gls{ue}-$1$ is assigned to the \gls{ul} pilot $[\sqrt{2},0]^\transp$ and \gls{ue}-$2$ to $[0,\sqrt{2}]^\transp$. Based on our framework, we want $0<\tau_\prob\!<\!\tau_\chest$; hence, we choose $\tau_\prob\!=\!1$. In this case, the \gls{hris} would receive just the first entries of the pilots. Since the first entry of \gls{ue}-$2$'s pilot is $0$, the \gls{hris} would be able to probe only \gls{ue}-$1$, no matter what \gls{ue}-$2$ does.
    \label{ex:no-detection}
\end{example}

To solve the above problem, we assume the following \textit{pilot repetition strategy}, consequently defining the duration of the \gls{chest} phase.\footnote{
    We note that the only explicit modification made in this work to incorporate autonomous \glspl{ris} into standard \gls{mmimo} technology is the repetition of \gls{ul} pilots. While this does not dictate practical implementation—such as using non-orthogonal pilot codebooks to eliminate the need for repetition—orthogonality simplifies the required designs and the interpretation of relevant trade-offs. End-to-end \gls{chest} procedures with non-autonomous \glspl{ris} also modify standard \gls{mmimo} (\textit{e.g.},~\cite{wei2021channel2}), making our assumption reasonable.
}

\begin{assumption}[\gls{ul} pilot repetition: Duration of the \gls{chest} phase]
    Each \gls{ue} re-transmits its pilot for $L>1$ times such that $\tau_\chest\!=\!L\tau_p$. We refer to each of the pilot repetitions as a \gls{ul} pilot subblock, following Definition~\ref{def:subblock}, which is indexed by the set $\mathcal{L}$ with $\mathcal{L}$ being a partition of the set $\mathcal{T}_p$. We index variables that occur on a pilot-subblock basis by introducing an $[l]$ in front of it, with $l\in\mathcal{L}$.
    \label{assu:pilot-repetition}
\end{assumption}

This assumption allows us to effectively accommodate the probe mode within the \gls{chest} phase while avoiding the problem seen in Example~\ref{ex:no-detection}. To enhance clarity, we will omit the prefix term ``pilot'' from subblocks when the context allows. We now define the duration of the modes as follows.

\begin{definition}[Duration of the modes]
    The probing duration can be defined as an integer multiple of the pilot length $\tau_p$, satisfying $0<\tau_\prob\!\leq\!\tau_\chest$. Thus, the probe mode spans for $\tau_\prob\!=\!C\tau_p$, where $1\!\leq\!C\!\leq\! L$ represents the number of pilot subblocks utilized by the \gls{hris} for probing. The specific subblocks during which the \gls{hris} probes are collected in the subset $\mathcal{C}\subseteq\mathcal{L}$. To clarify, we introduce a notation $[c]$ to index variables that occur on a pilot subblock basis during probing, with $c\in\mathcal{C}$. Hence, the fraction of the coherence block that the \gls{hris} operates in reflection mode is $\tau_\refl\!=\!\tau_c-C\tau_p$.
    \label{assu:probing-time}
\end{definition}

We further define the following relative quantities.

\begin{definition}[Relative duration of the modes within the CHEST phase]
    The relative duration of the probe mode within the \gls{chest} phase can be defined as:
    \begin{equation}
        \varpi = \frac{C}{L}, \text{ with } 0\leq\varpi\leq1,
        \label{eq:varpi:definition}
    \end{equation}
    where $\varpi$ equals $0$ in the absence of the probe mode, $C=0$, and equals $1$ when the probe mode occupies the entire duration of the \gls{chest} phase, $C=L$. Hence, the relative duration of the reflection mode is $1-\varpi$.
    \label{definition:varpi}
\end{definition}

With the orchestration framework and associated notation established, we can proceed to the system design, naturally dividing it into the \gls{hris} and \gls{mmimo} components, while remaining mindful of the trade-offs outlined in Section~\ref{sec:intro:contributions}.

\begin{figure}[!t]
    \centering
    \includegraphics[width=0.9\columnwidth]{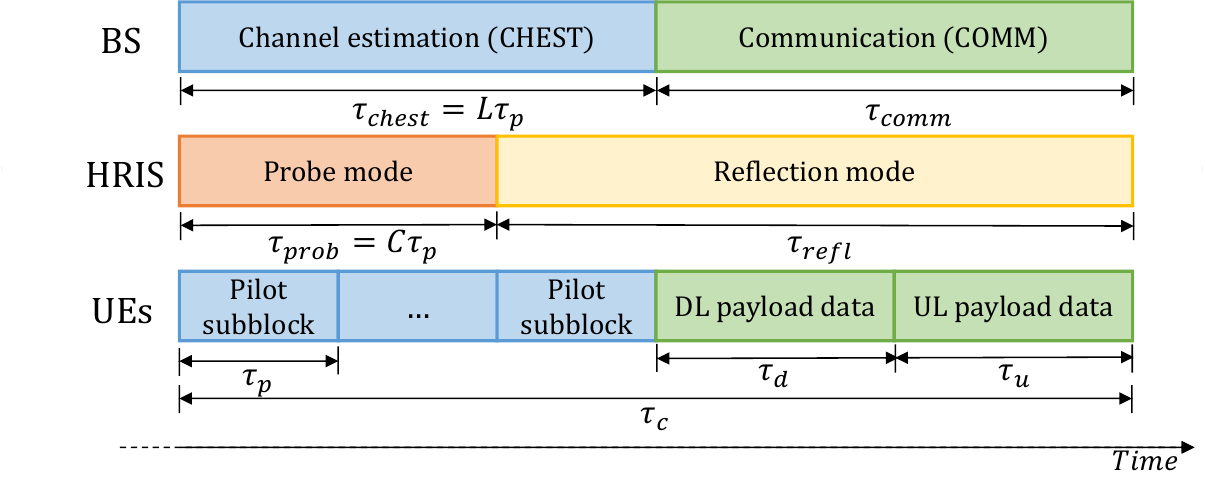}
    \vspace{-0.1in}
    \caption{
        Sample-based organization of a coherence block.
    }
    \label{fig:frame-design}
\end{figure}
%

\section{Designing the HRIS Operation}\label{sec:hris}
\noindent
In this section, we design the \gls{hris} operation with the trade-offs defined in Section~\ref{sec:intro:contributions} in mind. We first introduce the two hardware architectures, followed by the general considerations for the probe mode and two probing strategies tailored to each architecture. For clarity, we avoid overloading notation by not differentiating signals related to each architecture. Next, we outline a common reflection mode for both architectures. Finally, we discuss the computational complexity of the \gls{hris} operation for each architecture.

\subsection{The Two Hardware Architectures}\label{sec:hris:architectures}
\noindent
Figure~\ref{fig:hris-hardware} depicts the two \gls{hris} hardware architectures, {representing two implementation complexity extremes}. They follow the same basic operation from Section~\ref{sec:hris-model}, switching between operation modes by using or not the sensing hardware and loading configurations according to each mode. However, they differ in their \gls{dsp} capabilities, as follows. A \underline{\gls{pd}-enabled \gls{hris}} has a single \gls{rf}-combiner in the absorption branch, which analogically sums the signals absorbed by each element, followed by an \gls{rf}-power detector. This hardware architecture is the least complex and is limited to processing the combined received power only, that is, a \textit{single digital data stream}~\cite{albanese2021marisa,albanese2024ares}. A \underline{\gls{dsp}-enabled \gls{hris}} has an \gls{rf} chain for each element, resulting in \textit{$N$ separated digital data streams}. Thus, more advanced \gls{dsp} techniques can be applied over the $N$ acquired samples~\cite{alexandropoulos2021hybrid}.

\begin{figure}[t]
    \centering
    \includegraphics[width=\columnwidth]{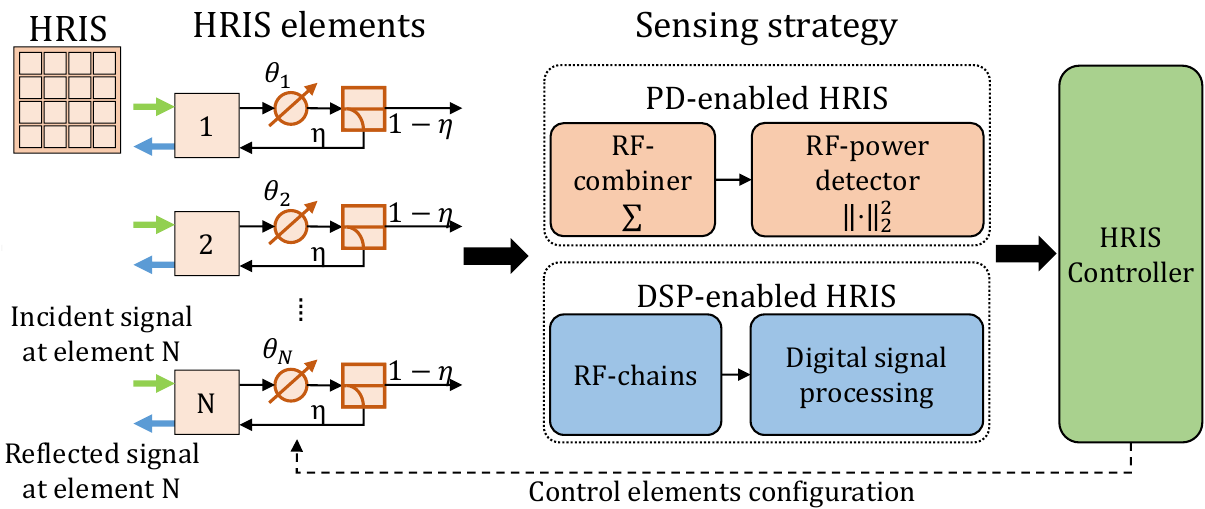}
    \vspace{-0.2in}
    \caption{
        \gls{pd}- and \gls{dsp}-enabled \gls{hris} hardware architectures.
    }
    \label{fig:hris-hardware}
\end{figure}
%

\subsection{Probe Mode: General Considerations}\label{sec:hris:probing}
\noindent
Building on Section~\ref{sec:framework}, we now discuss general considerations for the probe mode applicable to both hardware architectures. We begin with a simplifying assumption: the \gls{bs}-\gls{hris} \gls{csi} has been perfectly acquired at the \gls{hris}, \emph{e.g.}, by listening to standard synchronization and \gls{dl} pilot signals periodically transmitted by the \gls{bs}~\cite{3gpp_nr}. This assumption is supported because, following Section~\ref{sec:channel-model}, the coherence time of the \gls{bs}-\gls{hris} channel, $\mathbf{G}$, is often longer than that of the \gls{hris}-\gls{ue} channels, given the static nature of both the \gls{bs} and \gls{hris}~\cite{Yuan2022channeltracking}. Thus, we do not address the design of this aspect.

On the other hand, we address the design of detecting scheduled \glspl{ue} followed by a local \gls{chest} procedure of their \gls{csi}. From Section~\ref{sec:framework:detailed}, the \glspl{ue} transmit \gls{ul} pilots for $L$ pilot subblocks while the \gls{hris} probes during $C$ out of the $L$ subblocks. For the $c-$th subblock, the superimposed pilots, defined as $\mathbf{\Pi}[c]\in\mathbb{C}^{N \times \tau_p}$, impinging at the \gls{hris} are
\begin{equation}
    \boldsymbol{{\Pi}}[c] = \sqrt{\rho} \sum_{i\in\mathcal{K}} \mathbf{{r}}_i {\boldsymbol{\upphi}}^\transp_{p(i)},
    \label{eq:hris:probing:transmitted-pilots}
\end{equation}
where $\rho$ is the \gls{ue} transmit power, $p(i)$ denotes the pilot assigned to the $i-$th \gls{ue}, and $\mathbf{r}_i$ is the \gls{hris}-\gls{ue} channel of the $i-$th \gls{ue}, as in \eqref{eq:ricean-model}, for $c\in\mathcal{C}$. 

Per Assumption~\ref{assu:hris-configuration-change}, we establish that the \gls{hris} can change its configuration to help in probing for \glspl{ue}. For example, the \gls{hris} can ``scan'' the surrounding area by changing its configurations to detect signals coming from different directions. We refer to these as \emph{probing configurations}. Based on Assumption \ref{assu:probing-time}, the \gls{hris} can change probing configurations on a pilot subblock basis. Motivated by~\cite{albanese2021marisa,Croisfelt2023}, we will limit ourselves to the case that a \textit{probing configuration codebook} is available at the \gls{hris}, which is comprised of $C$ configurations—one configuration per pilot subblock—and is denoted as
\begin{equation}
     \Theta_{\rm P} = \left\{{\mathbf{\Theta_{\rm P}}[c]}\,|\, c\in\mathcal{C}\right\}.
     \label{eq:hris:probing:codebook}
\end{equation}
In general, the \gls{hris} must detect the scheduled \glspl{ue} and locally estimate their \gls{csi}. However, not all scheduled \glspl{ue} can be served by the \gls{hris}. For instance, a \gls{ue} located far from the \gls{hris} may be scheduled yet remain undetectable due to factors such as low received power, which may arise from an inadequate probing design or insufficient probing duration.\footnote{
    One could argue that the \gls{bs} can inform the \gls{hris} about scheduled \glspl{ue} through standard control channels. However, this does not eliminate the need for probing since a core part of it is to identify the relative positions of the \glspl{ue} at the \gls{hris}. Of course, if the \glspl{ue} are static and their channels as well (flat-fading), the \gls{hris} could probe less frequently.
} 

To design the probe mode, we employ detection and estimation theories~\cite{Kay1997detection,kay1997estimation}. In particular, we adopt two \textit{suboptimal} choices.\footnote{
    {We recall that our objective is not to optimize thoroughly the \gls{hris} operation but to comprehensively explore the fundamental trade-offs of an \gls{hris}-assisted system with a focus on robustness, as outlined in Section~\ref{sec:intro}.}
}
The first choice is \textit{channel-agnostic probing}, meaning that the probe mode does not rely on any prior knowledge of channel models, that is, the channels are treated as deterministic signals.\footnote{If partial or complete knowledge of channel models is available, more effective probe designs could be achieved.} Evidently, probing performance will vary statistically according to channel distributions and related parameters; the latter are treated as known nuisance parameters for performance evaluation~\cite{Kay1997detection}. Choosing this approach can also be justified by unfavorable characteristics of the propagation environment and application scenarios; \textit{e.g.}, rapid changes in the behaviors of \glspl{ue}~\cite{massivemimobook}. The second choice is \textit{minimal probing}, indicating that our aim is to capture essential probing functionalities without disproportionately favoring any specific hardware architecture. This is achieved by considering the simplest detection problem in the context of each hardware architecture: determining whether a signal embedded in noise is present or not~\cite{Kay1997detection}. Here, the signal of interest is treated as deterministic, while the noise distribution, although known, is not necessarily Gaussian and may have unknown parameters.

These choices can be interpreted as establishing a lower bound on probing performance for each hardware architecture, thereby offering a \textit{basic platform} for comparing the two architectures. We incentivize the study of more elaborated designs if the computational complexity remains practical. 

\paragraph*{\textbf{Output of the probe mode}} 
At the end of the \gls{hris} probing procedure, there are three main outputs that are going to be input to the reflection mode. First, the \textit{set of detected \glspl{ue}}, denoted as $\mathcal{K}_{\rm D}\!\subseteq\!\mathcal{K}$ with $|\mathcal{K}_{\rm D}|\!=\!K_{\rm D}$ and $K_{\rm D}\!\leq\!K$. Second, we note that the \textit{local \gls{csi}} needed by the \gls{hris} is the angular information of the \gls{hris}-\gls{ue} channels, $\angle\mathbf{r}_j$, which we denote as $\hat{\boldsymbol{\Theta}}_j\in\mathbb{C}^{N\times{N}}$, for $j\in\mathcal{K}_{\rm D}$. Third, a vector of \textit{relative importance weights}, denoted by $\boldsymbol{\omega}=[\omega_1,\dots,\omega_{K_{\rm D}}]^{\transp}\in\mathbb{R}_{+}^{K_{\rm D}}$, representing the relevance of each detected \gls{ue}.\footnote{
    As in~\cite{albanese2021marisa}, we focus on designing $\boldsymbol{\omega}$ by favoring \glspl{ue} according to their received amplitudes at the \gls{hris}. Other designs can be explored in the future.
}

\paragraph*{\textbf{Implementation complexity and probe distortion}} We observe that each hardware architecture leads to probe distortion with different characteristics. These differences arise from the specific configurations of the probe codebooks, $\Theta_{\rm P}$, the design of the probing scheme itself, and the influence of probing performance on reflection performance.

\subsection{PD-Enabled Probe Mode}\label{sec:hris:probing:power}

%
\subsubsection{Design of the probing configuration codebook}
Similar to~\cite{albanese2021marisa} and to overcome the lack of \gls{dsp} capabilities, the \gls{pd}-enabled \gls{hris} probes for \glspl{ue} by \textit{sweeping} through probing configurations in $\Theta_{\rm P}$. Thus, we build the probing configuration codebook to slice the 3D space into $C$ uniform sectors of interest, with $C=C_\text{el}C_\text{az}$ being decomposed into elevation and azimuth directions, respectively. The $n-$th diagonal element of the $c-$th probing configuration is 
\begin{equation}
    \left[\boldsymbol{\Theta}_{\rm P}[c]\right]_{n,n} = e^{j\langle\mathbf{k}(\mathbf{p}[{c}], \mathbf{e}), (\mathbf{e}_n - \mathbf{e})\rangle},
    \label{eq:hris:probing:power:sweeping}
\end{equation}
for $n\in\mathcal{N}$ and $c\in\mathcal{C}$, where the $c-$th probed position is
$\mathbf{p}[c]\! =\! \left[\sin\psi[c] \cos\phi[c], \sin\psi[c] \cos \phi[c], \cos\psi[c]\right]^\transp,
$
with the respective elevation and azimuth angular directions being $\psi[c] \!=\! {\pi}/{C_\text{el}} ( \text{mod}_{C_\text{el}} (c - 1) + {1}/{2} )$ and $\phi[c] \!=\! {\pi}/{C_\text{az}} ( ({c -1 - \text{mod}_{C_\text{el}} (c - 1)})/{C_\text{el}} + {1}/{2} )$.

\subsubsection{Probing procedure and performance analysis}
Consider a given pilot subblock $c$ in which the $c-$th probing configuration is loaded at the \gls{hris} according to \eqref{eq:hris:probing:power:sweeping}, for $c\in\mathcal{C}$. Let $\boldsymbol{\theta}_{\rm P}[c]\in \mathbb{C}^{N}$ denote the diagonal elements of $\mathbf{\Theta}_{\rm P}[c]$. Based on \eqref{eq:hris:probing:transmitted-pilots} and after the \gls{rf}-combiner (see Fig. \ref{fig:hris-hardware}), the received signal at the $c-$th subblock, $\mathbf{y}[c]\in\mathbb{C}^{{\tau_p}}$, is given by
\begin{equation} 
    \begin{aligned}
        (\mathbf{y}[{c}])^\transp = \sqrt{1 - \eta} \sqrt{\rho}(\boldsymbol{\theta}_{\rm P}[c])^\htransp \left( \sum_{i\in\mathcal{K}} \mathbf{{r}}_i {\boldsymbol{\upphi}}^\transp_{p(i)} \right) + (\mathbf{n}[c])^\transp,
    \end{aligned}
    \label{eq:hris:probing:power:received-signal}
\end{equation}
where $\mathbf{n}[c] = [n_1[c], \dots, n_{\tau_p}[c]]^\transp \in \mathbb{C}^{\tau_p}$ is the receiver noise at the \gls{hris} after the \gls{rf}-combiner; the noise is \gls{iid} over different subblocks and distributed as $\mathcal{CN}(\mathbf{0}, N\sigma_{\rm H}^2 \mathbf{I}_{\tau_p})$ with $\sigma^2_{\rm H}$ being the \textit{\gls{hris} noise power}. Let us focus on the $t-$th pilot and the $k-$th \gls{ue}, for $t\in\mathcal{T}_p$ and $k\in\mathcal{K}$. Based on Assumption~\ref{assu:ortho-pilots}, the above expression can be rewritten as 
\begin{equation}
    y_{t}[c]\!=\!
    \begin{cases}
        \sqrt{1-\eta}\sqrt{\rho}\sqrt{\tau_p}(\boldsymbol{\theta}_{\rm P}[c])^\htransp\mathbf{{r}}_k + {n}_{t}[c],& \text{if } p(k)\!=\!t\\
       {n}_{t}[c], & \text{o/w}.
    \end{cases}
    \label{eq:hris:probing:power:decorrelated-signal}
\end{equation} 
Let $\alpha_{t}[c]=|y_{t}[c]|^2$ denote the signal after the \gls{rf}-power detector in Fig.~\ref{fig:hris-hardware}. Then, we have that
\begin{equation}
    \alpha_{t}[c]\!=\!
    \begin{cases}
        |A_k[c]|^2 \!+ \!2 \Re\{A_k\![c] n_{t}\![c]\}\! + \!|n_{t}\![c]|^2,&\!\!\!\!\text{if }p(k)\!=\!t\\
        |{n}_{t}[c]|^2, &\!\!\!\!\text{o/w},
    \end{cases}
    \label{eq:hris:probing:power:signal-of-interest}
\end{equation}
where the amplitude $A_{k}$ is defined as $A_{k}[c]\!=\!\sqrt{1 - \eta} \sqrt{\rho} \sqrt{\tau_p} (\boldsymbol{\theta}_{\rm P}[c])^\htransp \mathbf{{r}}_k$. The \gls{pd}-enabled \gls{hris} can store and digitally process the signals $\alpha_{t}[c]$, $\forall t \in\mathcal{T}_p$, to detect the \glspl{ue}. We stress that $y_{t}[c]$ is not accessible for processing, since the \gls{pd}-enabled \gls{ris} can just measure the combined received power, $\alpha_{t}[c]$ (see Fig.~\ref{fig:hris-hardware}).

Thus, the \gls{pd}-enabled \gls{hris} detects if the $k-$th \gls{ue} is in the direction probed by the $c-$th configuration by applying the following binary hypothesis test over each pilot~\cite{Kay1997detection}:
\begin{equation}
    \begin{aligned}
        \mathcal{H}^{(k)}_0[c] &:\alpha_{t}[c] = \abs{{{n}}_{t}[c]}^2 \implies A_{k}[c] = 0, \\
        \mathcal{H}^{(k)}_1[c] &:\alpha_{t}[c] = |y_{t}[c]|^2 \implies A_{k}[c] \neq 0,
    \end{aligned}
    \label{eq:hris:probing:power:hypothesis}
\end{equation}
where the null hypothesis denotes the case in which the $k-$th \gls{ue} was not assigned to the $t-$th pilot, that is, $p(k)\neq t$ and, consequently, $A_k[c]=0$. Note that the test is performed on the amplitude $A_k[c]$, which is not directly observed from $\alpha_t[c]$, as seen in \eqref{eq:hris:probing:power:signal-of-interest}. Hence, we need to estimate $A_k[c]$ from $\alpha_t[c]$. 
Let $f_\mathrm{MLE}$ denote the \gls{mle}. The \gls{mle} for $|A_k[c]|^{2}$ from $\alpha_t[c]$ is $f_\mathrm{MLE}(A_{k}[c])\! =\! \alpha_t[c]$. Thus, the \gls{pd}-enabled \gls{hris} decides $\mathcal{H}^{(k)}_1[c]$ if~\cite[p. 200]{Kay1997detection}:
\begin{equation}
    \frac{p(\alpha_{t}[c];{f_\mathrm{MLE}(A_{k}[c])},\mathcal{H}^{(k)}_1[c])}{p(\alpha_{t}[c];{\mathcal{H}^{(k)}_0[c])}}>\epsilon_s,
    \label{eq:hris:probing:power:general-glrt}
\end{equation}
where $\epsilon_s$ is a threshold parameter. The detailed test description can be seen in Appendix~\ref{appx:proof-power}, which relies on an approximation based on ignoring the cross-term in~\eqref{eq:hris:probing:power:signal-of-interest}. We evaluate the approximated performance of the \gls{pd}-enabled \gls{hris} below.
\begin{corollary}[PD-enabled probing performance]
    An approximated closed-form expression of the performance of the \gls{pd}-based probe mode given by the test in~\eqref{eq:hris:probing:power:hypothesis} can be found in the asymptotic case of $N\xrightarrow{}\infty$ as~\cite{Kay1997detection}:
    \begin{equation}
        P^{(k)}_\mathrm{D}[c] = e^{-\frac{1}{2N\sigma^2_\mathrm{H}}(\epsilon^\prime_s-\alpha_{t}[c])} \text{ and }
        P^{(k)}_\mathrm{FA}[c] = e^{-\frac{1}{2N\sigma^2_\mathrm{H}}\epsilon^\prime_s},
        \label{eq:hris:power:performance}
    \end{equation}
    where $P^{(k)}_{\rm D}[c]$ and $P^{(k)}_\mathrm{FA}[c]$ are the probabilities of detection and false alarm for detecting the $k-$th \gls{ue} in the $c-$th pilot subblock, respectively, for $c\in\mathcal{C}$ and $k\in\mathcal{K}$. The threshold parameter $\epsilon^\prime_s$ is proportional to $\epsilon_s$ in \eqref{eq:hris:probing:power:general-glrt}.
    \label{corollary:power:performance}    
\end{corollary}
\begin{proof}
    The proof is given in Appendix~\ref{appx:proof-power}.
\end{proof}

We note that the above performance measures overestimate the real performance due to approximations made. Moreover, the measures are stochastic and vary on a coherence-block basis with factors such as the positions of \glspl{ue}.

\subsubsection{Output} 
After performing the test in \eqref{eq:hris:probing:power:hypothesis} over all $\tau_p$ pilots, the \gls{hris} stores the detected \glspl{ue} in the set $\mathcal{K}_{\rm D}[c]=\{k \in \mathcal{K}\,|\, \mathcal{H}^{(k)}_1[c] \text{ is true}\}$. This is repeated and aggregated over all $C$ pilot subblocks. The \gls{hris} then stores all the detected \glspl{ue} along with their corresponding probing configurations that achieved the highest received power as:
\begin{equation}
    \mathcal{K}_{\rm D}=\bigcup_{c\in\mathcal{C}}\mathcal{K}_{\rm D}[c] \,\, \text{and} \,\, c_j = \argmax_{c\in\mathcal{C}} \alpha_{p(j)}[c], \, \forall j \in \mathcal{K}_{\rm D}.  
    \label{eq:power:end}
\end{equation}
The \gls{pd}-enabled \gls{hris} cannot explicitly estimate the \gls{hris}-\gls{ue} channels, $\mathbf{r}_j$, of the detected \glspl{ue} since it is limited to observe signal power, as in \eqref{eq:hris:probing:power:signal-of-interest}. Therefore, the best this \gls{hris} can do is to use the probing configuration that achieved the highest received power as an estimate of the local \gls{csi} for each detected \gls{ue}. Thus, the local \gls{csi} at the \gls{pd}-enabled \gls{hris} is 
\begin{equation}
    \hat{\boldsymbol{\Theta}}_j = \boldsymbol{\Theta}_{\rm P}[c_j],\,\forall j\in\mathcal{K}_{\rm D},
    \label{eq:hris:probing:power:output}
\end{equation}
where $c_j$ comes from~\eqref{eq:power:end} and relative importance weights are
\begin{equation}
    {\omega}_j=
    {
        \sqrt{\alpha_{p(j)}[c_j]}
    }
    /
    {
    \sum_{j^{\prime} \in \mathcal{K}_{\rm D}} \sqrt{\alpha_{p(j^{\prime})}[c_{j^{\prime}}]}
    }
    ,\,\forall j\in\mathcal{K}_{\rm D}.
    \label{eq:hris:probing:power:weights}
\end{equation}
For the \gls{pd}-enabled \gls{hris}, testing on a subblock basis is crucial, as the local \gls{csi} estimation relies on this structure.

\subsection{DSP-Enabled Probe Mode}\label{sec:hris:probing:signal}

%
\subsubsection{Design of the probing configuration codebook}
A \gls{dsp}-enabled \gls{hris} can process the received signals coming from all elements simultaneously, and, thus, it can always reverse back the effect of any impressed probing configuration $\mathbf{\Theta}_{\rm P}[c]$ digitally at the price of increased computational effort. This involves multiplying a signal received at a given sample of the $c-$th pilot subblock by $\mathbf{\Theta}^{-1}_{\rm P}[c]$. Thus, we assume that the probing configuration codebook $\Theta_{\rm P}$ is $\mathbf{\Theta}_{\rm P}[c] = \mathbf{I}_N$, $\forall c \in\mathcal{C}$.

\subsubsection{Probing procedure and performance analysis}
Based on~\eqref{eq:hris:probing:transmitted-pilots} and the above $\Theta_{\rm P}$, the received signal at the $c-$th pilot subblock, $\mathbf{Y}_{c}\in\mathbb{C}^{N \times \tau_p}$, is given by
\begin{equation} 
    \mathbf{Y}[c] = \sqrt{1 - \eta} \sqrt{\rho} \sum_{i\in\mathcal{K}} \mathbf{{r}}_i {\boldsymbol{\upphi}}^{\transp}_{p(i)} + \mathbf{N}[c],
    \label{eq:hris:probing:signal:received-signal}
\end{equation}
where $\mathbf{N}[c]\in\mathbb{C}^{N \times \tau_p}$ is the receiver noise matrix with columns distributed according to $\mathbf{n}_{t}[c]\! \sim\! \mathcal{CN}(\mathbf{0}, \sigma_{\mathrm{H}}^2 \mathbf{I}_N)$ with noise \gls{iid} over subblocks. Unlike the \gls{pd}-enabled, the \gls{dsp}-enabled \gls{hris} can process the received signal over the element-dimension, $N$, the pilot-dimension, $\tau_p$, and the pilot-subblock-dimension, $c$. We explore this next. Let us focus on the $t-$th pilot and the $k-$th \gls{ue}, for $t\in\mathcal{T}_p$ and $k\in\mathcal{K}$. We start by processing over the pilot dimension. The \gls{hris} de-correlates the received signal \gls{wrt} the $t-$th pilot as: 
\begin{equation} 
    \tilde{\mathbf{y}}_{t}[c]\!=\!\mathbf{Y}[c]{\boldsymbol{\upphi}}^{*}_t\!=\! 
    \begin{cases}
        \sqrt{1-\eta} \sqrt{\rho} \tau_p \mathbf{{r}}_k + \tilde{\mathbf{n}}_{t}[c],& \hspace{-2mm}\text{if }p(k)=t\\
        \tilde{\mathbf{n}}_{t}[c],& \hspace{-2mm}\text{o/w},
    \end{cases}
    \label{eq:hris:signal:decorrelate}
\end{equation}
where $\tilde{\mathbf{n}}_t[c]\!\sim\! \mathcal{CN}(\mathbf{0}, \tau_p\sigma_{\mathrm{H}}^2 \mathbf{I}_N)$. Next, to combat noise, the above signals can be averaged over subblocks as
\begin{equation}
    \check{\mathbf{y}}_{t}=\frac{1}{C}\sum_{c\in\mathcal{C}}\tilde{\mathbf{y}}_{t}[c]=
    \begin{cases}
        \sqrt{1 - \eta} \sqrt{\rho} \tau_p \mathbf{{r}}_k + \check{\mathbf{n}}_{t},& \text{if } p(k)=t\\
        \check{\mathbf{n}}_{t}[c], & \text{o/w},
    \end{cases}
    \label{eq:hris:probing:signal:averaged-subblocks}
\end{equation}
where $\check{\mathbf{n}}_{p(k)}\!\sim\!\mathcal{CN}(0,{\tau_p\sigma_{\mathrm{H}}^2}/{C}) \mathbf{I}_N)$. Let $\mathbf{si}_t\!=\!\sqrt{1 - \eta} \sqrt{\rho} \tau_p \mathbf{{r}}_k$ be the complex signal observed if the $k-$th \gls{ue} transmitted the $t-$th pilot, that is, $p(k)=t$ and $\mathbf{si}_t\!=\!\mathbf{0}$ otherwise. The \gls{dsp}-enabled \gls{hris} can store and digitally process the signals $\check{\mathbf{y}}_{t}$, $\forall t\in\mathcal{T}_p$, to detect the \glspl{ue}.

Thus, the \gls{dsp}-enabled \gls{hris} detects the $k-$th \gls{ue} by applying the following binary hypothesis test over each pilot~\cite{Kay1997detection}:
\begin{equation}
    \begin{aligned}
        \mathcal{H}^{(k)}_0 &: \check{\mathbf{y}}_{t} = \check{\mathbf{n}}_{t} & \implies \mathbf{si}_{t} = \mathbf{0},\\
        \mathcal{H}^{(k)}_1 &: \check{\mathbf{y}}_{t} = \mathbf{si}_{t} + \check{\mathbf{n}}_{t} & \implies \mathbf{si}_{t} \neq \mathbf{0},
    \end{aligned}
    \label{eq:hris:signal:hypothesis}
\end{equation}
where, as before, the null hypothesis denotes the case in which the $k-$th \gls{ue} was not assigned to the $t-$th pilot. Unlike the \gls{pd}-enabled, the detection is now independent on the pilot subblocks due to the higher \gls{dsp} capability. Thus, the \gls{dsp}-enabled \gls{hris} decides $\mathcal{H}^{(k)}_1$ if~\cite[p.~500]{Kay1997detection}:
\begin{equation}
    \dfrac{2C}{\tau_p \sigma^2_{\rm H}}\ltwonorm{\check{\mathbf{y}}_t}^2 > \epsilon_s,
    \label{eq:hris:probing:detection:signal:test}
\end{equation}
where $\epsilon_s$ is a threshold parameter.
We evaluate the performance of the \gls{dsp}-enabled \gls{hris} probe mode below.

\begin{corollary}[DSP-enabled probing performance]
    A closed-form expression of the performance of the {\gls{dsp}-enabled} \gls{hris} probe mode given by the test in~\eqref{eq:hris:probing:detection:signal:test} can be found as~\cite{Kay1997detection} 
    \begin{equation}
        P^{(k)}_\text{D} = Q_{\chi^{2}_{2N}(\mu)}(\epsilon_s) \text{ and } P^{(k)}_\text{FA} = Q_{\chi^{2}_{2N}}(\epsilon_s),
    \label{eq:hris:signal:performance}
    \end{equation}
    where $P^{(k)}_{\rm D}$ and $P^{(k)}_\mathrm{FA}$ are the probabilities of detection and false alarm for detecting the $k-$th \gls{ue}, respectively, and 
    $\mu={2(1-\eta)\rho\tau_p}\ltwonorm{\mathbf{r}_{k}}^2/{\sigma^{2}_{\rm H}}, \text{ for } k\in\mathcal{K}.$
    \label{corollary:signal:performance}    
\end{corollary}
\begin{proof}
    The proof follows directly from~\cite[p.~500]{Kay1997detection}.
\end{proof}
Unlike Corollary~\ref{corollary:power:performance}, the probing performance is now independent of the pilot subblocks. However, it remains stochastic and exhibits variability on a coherence-block basis, influenced by factors such as the positions of the \glspl{ue}. 

\subsubsection{Output} 
After performing the test in \eqref{eq:hris:probing:detection:signal:test} over all $\tau_p$ pilots, the \gls{hris} stores the detected \glspl{ue} in the set $\mathcal{K}_{\rm D}\!=\!\{t | \mathcal{H}^{(t)}_1 \text{ is true},\, t\! \in\! \mathcal{T}_{p}\}$. Unlike the \gls{pd}-enabled, the \gls{dsp}-enabled \gls{hris} can explicitly estimate the angular information of the \gls{hris}-\gls{ue} channels, $\angle\mathbf{r}_j$, for $j\in\mathcal{K}_{\rm D}$, by exploiting the signal in~\eqref{eq:hris:probing:signal:averaged-subblocks} to perform such estimation. We now describe such an estimation process for the $j-$th detect \gls{ue} assigned to the $t-$th pilot with $p(j)\!=\!t$, for $j\!\in\!\mathcal{K}_{\rm D}$.
Let ${\boldsymbol{\theta}}_{j}\!=\!\angle\mathbf{r}_j\!\in\!\mathbb{C}^{N}$ denote the angular information of the \gls{hris}-\gls{ue} channel, which we are interested in estimating. From~\eqref{eq:hris:probing:signal:averaged-subblocks}, we observe that the angular information contained in $\mathbf{si}_{j}$ is equivalent to the one contained in $\mathbf{r}_j$, that is, $\angle\mathbf{si}_{j}\equiv\angle\mathbf{r}_j$, since $\mathbf{si}_{j}$ is proportional to $\mathbf{r}_j$. The estimation of ${\boldsymbol{\theta}}_j$ is then based on rewriting the signal in \eqref{eq:hris:probing:signal:averaged-subblocks} as $\check{\mathbf{y}}_{t}\!=\!\mathbf{si}_{j} + \check{\mathbf{n}}_{t}.$ Thus, the \gls{hris} estimates ${\boldsymbol{\theta}}_j$ as
\begin{equation}
    \hat{\boldsymbol{\theta}}_j = \exp \left(1j \arctan\left(\frac{\mathfrak{I}(\check{\mathbf{y}}_{t})}{\mathfrak{R}(\check{\mathbf{y}}_{t})}\right) \right),
    \label{eq:hris:probing:signal:estimation}
\end{equation}
where the $\exp(\cdot)$ and $\arctan(\cdot)$ functions are applied element-wise over the vector entries, and we use the notation $1j$ to stress the difference between the \gls{ue} index and the imaginary unit. The estimation error can be numerically approximated as the variance of the signal $\check{\mathbf{y}}_{t}\!\sim\!\mathcal{CN}(\mathbf{si}_j,({\tau_p\sigma_{\mathrm{H}}^2}/{C}) \mathbf{I}_N)$. Thus, the local \gls{csi} at the \gls{dsp}-enabled \gls{hris} is 
\begin{equation}
    \hat{\boldsymbol{\Theta}}_j = \diag(\hat{\boldsymbol{\theta}}_j),\, \forall j\in\mathcal{K}_{\rm D}.
    \label{eq:hris:probing:signal:output}
\end{equation}
Similar to before, we compute the corresponding weights as
\begin{equation}
    {\omega}_j= 
    {
        \ltwonorm{\check{\mathbf{y}}_{p(j)}}
    }
    /
    {
        \sum_{j^{\prime}\in\mathcal{K}_{\rm D}}\ltwonorm{\check{\mathbf{y}}_{p(j^{\prime})}}
    },\, \forall j\in\mathcal{K}_{\rm D}.
    \label{eq:hris:probing:signal:weights}
\end{equation}

\subsection{Reflection Mode}\label{subsec:hris:refl-mode}
\noindent
We design the reflection mode based on the outputs of the probe mode, specifically, the set of detected \glspl{ue}, $\mathcal{K}_D$, the local \gls{csi}, $\hat{{\boldsymbol{\Theta}}}_{j}$, and the relative importance weights, $\omega_j$, $\forall j\in\mathcal{K}_D$. The latter two quantities are provided in eqs.~\eqref{eq:hris:probing:power:output} and~\eqref{eq:hris:probing:power:weights} for a \gls{pd}-enabled \gls{hris}, and in eqs.~\eqref{eq:hris:probing:signal:output} and \eqref{eq:hris:probing:signal:weights} for a \gls{dsp}-enabled while the former is the collective result of the tests in \eqref{eq:hris:probing:power:hypothesis} and \eqref{eq:hris:signal:hypothesis}, respectively. Thus, the design of the reflection mode remains independent of the \gls{hris} hardware architecture, although its performance eventually differs due to distinct probe performance. In principle, the \gls{hris} would load \textit{one} reflection configuration to assist the \gls{ul} data traffic and \emph{another} for \gls{dl}. However, by leveraging channel reciprocity, we note that the reflection configuration for the \gls{dl} is the complex conjugate of the one used during \gls{ul}. Hence, we focus solely on designing a single reflection configuration for \gls{ul}, denoted as $\hat{\boldsymbol{\Theta}}_{\rm R}$. {In particular, we adopt the reflection design from~\cite{albanese2021marisa}, whose goal is to maximize the received \gls{snr} of the detected \glspl{ue} while taking into account their relative importance weights. This is obtained by setting $\hat{\boldsymbol{\Theta}}_{\rm R}$ as}
\begin{equation}
    \hat{\boldsymbol{\Theta}}_{\rm R} = \mathbf{\Theta}_{\rm B} \circ
    \sum_{k\in\mathcal{K}_{\rm D}} \omega_k \hat{\mathbf{\Theta}}_{k}^{*},
    \label{eq:hris:reflection:configuration}
\end{equation}
where $\boldsymbol{\Theta}_{\rm B}=\diag(\mathbf{a}_{\rm H}(\mathbf{b}))$ denotes the perfect \gls{csi} of the \gls{hris}-\gls{bs} channel, $\mathbf{G}$. {Similar to probing, we argue that this is a minimal reflecting design (see details in~\cite{albanese2021marisa}).}

\paragraph*{{\textbf{Evaluating reflecting performance}}}
We now present a metric to evaluate the designed reflection configuration. In an ideal scenario of perfect probing, the \gls{hris} would employ the following optimal reflection configuration, assuming that all \glspl{ue} are detected and their \gls{csi} is accurately estimated:
\begin{equation}
    {\boldsymbol{\Theta}}^{\star}_{\rm R} = \mathbf{\Theta}_{\rm B} \circ \sum_{k\in\mathcal{K}} \omega^{\star}_k \diag(\boldsymbol{\theta}^{\star}_k)^{*},
\end{equation}
where we use $(\cdot)^{\star}$ to denote optimal in the sense {defined in~\cite{albanese2021marisa}} with 
$
    \boldsymbol{\theta}^{\star}_k\!=\! \exp({j \arctan\left({\mathfrak{I}({\mathbf{r}}_k)}/{\mathfrak{R}({\mathbf{r}}_k)}\right)}) 
$
and 
$
\omega^{\star}_k\! =\! {\lVert\mathbf{r}_k\rVert_2} /{\sum_{i\in\mathcal{K}}\lVert\mathbf{r}_i\rVert_2}.
$ 
Thus, a metric for evaluating reflection accuracy is the \gls{nmse}:
\begin{equation}
    \mathrm{NMSE}_{\rm H}(\varpi) = {\lVert\hat{\boldsymbol{\theta}}_{\rm R}-{\boldsymbol{\theta}}^{\star}_{\rm R}\rVert^2_2} /{\lVert{\boldsymbol{\theta}}^{\star}_{\rm R}\rVert^2_2},
    \label{eq:hris:reflection:nmse}
\end{equation}
which is a function of the relative probe duration $\varpi$ (see~Def.~\ref{definition:varpi}) and where $\hat{\boldsymbol{\theta}}_{\rm R}$ and ${\boldsymbol{\theta}}^{\star}_{\rm R}$ are the respective diagonals of $\hat{\boldsymbol{\Theta}}_{\rm R}$ and ${\boldsymbol{\Theta}}^{\star}_{\rm R}$. Observe that $\mathrm{NMSE}_{\rm H}(\varpi)$ captures implementation complexity, as it depends on the chosen \gls{hris} hardware architecture, and probe distortion, as it depends on probing performance; it also statistically varies over coherence blocks with factors such as the positions of \glspl{ue}, and channel and noise realizations.

\subsection{Complexity Analysis: HRIS Operation Modes}\label{sec:hris:complexity}
\noindent
For the \gls{pd}-enabled \gls{hris}, the \gls{rf}- combiner and power detector can be implemented with analog circuitry. Thus, computing is required for~\eqref{eq:hris:probing:power:general-glrt},~\eqref{eq:power:end}, and~\eqref{eq:hris:probing:power:weights}, yielding in a total of $C + K_{\max} C + 3 K_{\max}$ element-wise operations. The \gls{dsp}-enabled \gls{hris} performs~\eqref{eq:hris:signal:decorrelate},~\eqref{eq:hris:probing:signal:averaged-subblocks},~\eqref{eq:hris:probing:detection:signal:test},~\eqref{eq:hris:probing:signal:estimation}, and~\eqref{eq:hris:probing:signal:weights}, resulting in a total of $NK_{\max}^3 + 5NK_{\max} + CN + 2N$ element-wise operations. By adding up $2 N K_{\max}$ operations to compute the diagonal reflection configuration in~\eqref{eq:hris:reflection:configuration}, we obtain the computational complexities of $\mathcal{O}(K_{\max} (2N + C + 3) + C)$ for the \gls{pd}-enabled \gls{hris} and of $\mathcal{O}(N (K_{\max}^3 + 7 K_{\max} + C + 2))$ for the \gls{dsp}-enabled one. Hence, observe that the complexity of the \gls{pd}-enabled \gls{hris} increases linearly with system parameters while the complexity of the \gls{dsp}-enabled scales cubically with $K_{\max}$; more concerning is the comparison between $2NK_{\max}$ for the former against $NK_{\max}^3$ for the latter.

\section{Designing the mMIMO Operation}\label{sec:mmimo}
\noindent
In this section, we adapt the design of a traditional \gls{mmimo} system~\cite{massivemimobook} to account for the influence of \gls{hris} operation, {having the trade-offs defined in Section~\ref{sec:intro:contributions} in mind.} For generality, we will assume a generic \gls{hris} hardware architecture, interchangeable with either \gls{pd}-enabled, \gls{dsp}-enabled, or other architectures. For simplicity, we assume the \gls{comm} phase includes only \gls{ul} traffic, with $\tau_d\!=\!0$ and $\tau_u\!=\!\tau_c-L\tau_p$; extension to the \gls{dl} case is straightforward. 

\subsection{CHEST Phase}
\noindent
Based on Sections~\ref{sec:system-model} and~\ref{sec:framework}, we can now formally define the following two equivalent channels for the $i-$th \gls{ue}:
\begin{equation}
        {\mathbf{h}}_{\rm{P}, i}[l]\!=\!\mathbf{h}_{\mathrm{DR},i}+\sqrt{\eta}\mathbf{G}\mathbf{\Theta}_{\rm P}[l]\mathbf{{r}}_i 
        \text{ and }
        {\mathbf{h}}_{\rm{R}, i}\!=\!\mathbf{h}_{\mathrm{DR},i}+\sqrt{\eta}\mathbf{G}\hat{\mathbf{\Theta}}_{\rm R}\mathbf{{r}}_i,\!\!
    \label{eq:mmimo:chest:equivalent-channels}
\end{equation}
where ${\mathbf{h}}_{\rm{P},i}[l]$ denotes the \textit{probing equivalent channel} during the $l-$th pilot subblock and ${\mathbf{h}}_{\rm{R},i}$ represents the \textit{reflecting equivalent channel}, for $l\in\mathcal{C}$ and $i\in\mathcal{K}$. Recall that $\mathcal{C}$ is the set of pilot subblocks in which the \gls{hris} probes (see Def.~\ref{assu:probing-time}), $\mathbf{\Theta}_{\rm P}[l]$ is the probing configuration of the $l-$th subblock, as in~\eqref{eq:hris:probing:codebook}, and $\hat{\mathbf{\Theta}}_{\rm R}$ is the reflection configuration, as in~\eqref{eq:hris:reflection:configuration}. Following~\eqref{eq:hris:probing:transmitted-pilots} and the above definitions, the \gls{bs} receives the superimposed pilots at the $l-$th subblock, $\mathbf{Z}[l]\in\mathbb{C}^{M\times{\tau_p}}$, as:
\begin{equation}
    \mathbf{Z}[l]=
    \sqrt{\rho}
    \begin{cases}
        \sum_{i\in\mathcal{K}}{\mathbf{h}}_{\rm{P},i}[l] {\boldsymbol{\upphi}}^{\transp}_{p(i)}+\mathbf{W}[l], & \text{if } l\in\mathcal{C}\\
        \sum_{i\in\mathcal{K}}{\mathbf{h}}_{\rm{R},i} {\boldsymbol{\upphi}}^{\transp}_{p(i)}+\mathbf{W}[l], & \text{o/w},
    \end{cases}
    \label{eq:mmimo:received-pilots}
\end{equation}
where $\mathbf{W}[l]\in\mathbb{C}^{M\times{\tau_p}}$ is the \gls{bs} receiver noise whose \gls{iid} entries follow $\mathcal{CN}(0,\sigma_{\rm B}^2)$ with $\sigma^2_{\rm B}$ being the \textit{\gls{bs} noise power}; the noise is also \gls{iid} over subblocks. In principle, the oblivious \gls{bs} aims to estimate the stable reflecting equivalent channels $\{{\mathbf{h}}_{\mathrm{R},i}\}_{i\in\mathcal{K}}$ from the collected $\mathbf{Z}[l]$, for $l\in\mathcal{C}$. However, this estimation process suffers from the probe distortion, which is now formally characterized by the summation of probing equivalent channels, $\sum_{i\in\mathcal{K}}{\mathbf{h}}_{\rm{P},i}[l]$. 

{Under the assumption of an oblivious \gls{bs}, the \gls{bs} carries out the following \gls{chest} procedure.} For the sake of argument, we focus on a single \gls{ue} $k$ that was assigned the $t-$th pilot, $p(k)\!=\!t$, for $t\!\in\!\mathcal{T}_p$ and $k\!\in\!\mathcal{K}$. Let $\mathbf{Z}\!=\![\mathbf{Z}[1],\dots,\mathbf{Z}[L]]\!\in\!\mathbb{C}^{M\times{L\tau_p}}$ be the horizontally concatenated matrix of all pilot subblocks received by the \gls{bs}. Denote as $\boldsymbol{\upphi}_{Lt}\!=\![{\boldsymbol{\upphi}}_{t};\dots;{\boldsymbol{\upphi}}_{t}]\!\in\!\mathbb{C}^{L\tau_p}$ the vector containing the $t-$th pilot repeated $L$ times. The \gls{bs} first takes the mean of the de-correlated received signals in \eqref{eq:mmimo:received-pilots}, yielding in $\bar{\mathbf{z}}_k\!=\!\frac{1}{L}\mathbf{Z}{\boldsymbol{\upphi}}_{Lt}^{*}$ as 
\begin{equation}
        \bar{\mathbf{z}}_k\stackrel{(a)}{=} \sqrt{\rho} \tau_p \underbrace{\left(
        \frac{1}{L}\sum_{l=1}^{C}\mathbf{{h}}_{\mathrm{P}, k}[l]+(1-\varpi){\mathbf{h}}_{\mathrm{R}, k}\right)}_{=\bar{\mathbf{h}}_k} + \frac{1}{L} \sum_{l=1}^{L} {\mathbf{w}_t}[l],
        \label{eq:mmimo:eq-channel}     
\end{equation}
where ${\mathbf{w}_t}[l]\!\sim\!\mathcal{CN}(\mathbf{0},\tau_p \sigma_{\rm B}^2 \mathbf{I}_{M})$ is the equivalent receiver noise and $\bar{\mathbf{h}}_k\in\mathbb{C}^{M}$ is defined as the \textit{average equivalent channel}. In $(a)$, we have used Def.~\ref{definition:varpi} for $\varpi$ as the relative probe duration. Below, we provide the \gls{ls} estimate of $\bar{\mathbf{h}}_k$.\footnote{
    To align with the {channel-agnostic probing design}, we assume that the \gls{bs} has no prior knowledge of the channel statistics. Otherwise, Bayesian estimation methods~\cite{kay1997estimation} could be employed to enhance performance further.
}

\begin{corollary}[CHEST at the BS]\label{corollary:ls-channel-estimation}
    The \gls{ls} estimate of the average equivalent channel $\bar{\mathbf{h}}_k$ based on $\bar{\mathbf{z}}_{k}$ is
    $
        \hat{\mathbf{h}}_k\!=\!{(\sqrt{\rho}\tau_p)^{-1}}\bar{\mathbf{z}}_{k}\text{ with }\hat{\mathbf{h}}_k\!\sim\!\mathcal{CN}(\bar{\mathbf{h}}_k,\hat{\sigma}^{2}\mathbf{I}_M),
    $
    where $\hat{\sigma}^2\!=\!{\sigma^2_{\rm B}}/{(L\rho\tau_p)}$ denotes the variance of the estimate.
\end{corollary}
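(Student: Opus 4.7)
The statement is essentially a direct computation, so my plan is to organize it as three short steps: formulation of the LS problem, explicit inversion, and distribution identification.

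First, I would treat $\bar{\mb{h}}_k$ as a deterministic unknown (as befits an LS formulation, which makes no use of the channel statistics, consistent with the assumption stated in the footnote). From eq.~\eqref{eq:mmimo:eq-channel}, the observation model is linear in $\bar{\mb{h}}_k$ with gain $\sqrt{\rho}\tau_p \mb{I}_M$ and additive noise $\mb{v}_k \triangleq \tfrac{1}{L}\sum_{l=1}^{L}\mb{w}_t[l]$. The LS estimator is therefore the solution to
\begin{equation*}
\hat{\mb{h}}_k = \argmin_{\mb{h}\in\mathbb{C}^M} \bigl\lVert \bar{\mb{z}}_k - \sqrt{\rho}\tau_p\, \mb{h}\bigr\rVert_2^2,
\end{equation*}
and because the forward operator $\sqrt{\rho}\tau_p\mb{I}_M$ is a nonzero scalar multiple of the identity, the solution is unique and admits the closed form $\hat{\mb{h}}_k = \bar{\mb{z}}_k/(\sqrt{\rho}\tau_p)$, which matches the claim.

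Second, I would substitute the expression \eqref{eq:mmimo:eq-channel} for $\bar{\mb{z}}_k$ into this closed form to obtain
\begin{equation*}
\hat{\mb{h}}_k = \bar{\mb{h}}_k + \frac{1}{\sqrt{\rho}\tau_p}\, \mb{v}_k,\qquad \mb{v}_k = \frac{1}{L}\sum_{l=1}^{L}\mb{w}_t[l].
\end{equation*}
Since $\mb{w}_t[l]\sim\mc{CN}(\mb{0},\tau_p\sigma_{\rm B}^2\mb{I}_M)$ are i.i.d.\ over $l$ (by the i.i.d.\ assumption on the noise across pilot subblocks), the sum is $\mc{CN}(\mb{0}, L\tau_p\sigma_{\rm B}^2\mb{I}_M)$, and dividing by $L$ yields $\mb{v}_k\sim\mc{CN}(\mb{0},\tfrac{\tau_p\sigma_{\rm B}^2}{L}\mb{I}_M)$. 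Scaling further by $1/(\sqrt{\rho}\tau_p)$ and using the fact that circularly symmetric complex Gaussianity is preserved under deterministic affine maps with diagonal covariance, the noise contribution to $\hat{\mb{h}}_k$ is $\mc{CN}\!\left(\mb{0},\tfrac{\sigma_{\rm B}^2}{L\rho\tau_p}\mb{I}_M\right)$.

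Third, identifying mean and covariance gives $\hat{\mb{h}}_k\sim\mc{CN}(\bar{\mb{h}}_k,\hat{\sigma}^2\mb{I}_M)$ with $\hat{\sigma}^2=\tfrac{1}{L}\tfrac{\sigma_{\rm B}^2}{\rho\tau_p}$, as asserted. There is no genuine obstacle here, the only subtlety worth emphasizing is the conditioning convention: $\bar{\mb{h}}_k$ is itself a random vector (it mixes the probe-mode and reflection-mode equivalent channels with fixed weights through $\mb{\Theta}_{\rm P}[l]$ and $\hat{\mb{\Theta}}_{\rm R}$), so the distribution in the statement must be read as the conditional law of $\hat{\mb{h}}_k$ given $\bar{\mb{h}}_k$. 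I would add a one-line remark to make this explicit, so that later uses of $\hat{\mb{h}}_k$ in the \gls{se} analysis can correctly treat the estimation noise as independent of the averaged equivalent channel.
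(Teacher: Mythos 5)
Your derivation is correct and is essentially the same argument the paper invokes: the paper simply cites the standard LS result for a scaled-identity linear model from Kay's estimation text, which is exactly the computation you carry out explicitly (closed-form inversion, then propagation of the i.i.d.\ subblock noise through the averaging and scaling). Your closing remark that the stated Gaussian law must be read conditionally on $\bar{\mb{h}}_k$ is a sensible clarification but does not change the substance.
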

\begin{proof}
    The proof follows~\cite[p. 225]{kay1997estimation}.
\end{proof}

\paragraph*{\textbf{Measuring probe distortion}}
Thus, instead of estimating $\mathbf{h}_{\mathrm{R},i}$, the \gls{bs} estimated $\bar{\mathbf{h}}_k$. To evaluate the quality of this \gls{csi} and capture the impact of probe distortion, we define the following average \gls{nmse}:
\begin{equation}
    \overline{\mathrm{NMSE}}_{\rm{B}, k}(\varpi)\!=\!
    \E{
        \frac{
            \ltwonorm{
                \hat{\mathbf{h}}_k\! -\! {\mathbf{h}}_{\mathrm{R},k}
            }^2
            }{
            \ltwonorm{
                {\mathbf{h}}_{\mathrm{R},k}
            }^2
            } 
    } \!=\! 
    \frac{
        M\hat{\sigma}^2\!+\!\ltwonorm{\bar{\mathbf{h}}_k\!-\!{\mathbf{h}}_{\mathrm{R},k}}^2
    }{
        \ltwonorm{{\mathbf{h}}_{\mathrm{R},k}}^{2}
    },
    \label{eq:mmimo:chest:nmse1}
\end{equation}
which is a function of the relative probe duration $\varpi$ (see~Def.~\ref{definition:varpi}) and where the expectation was taken over noise realizations. Per~\eqref{eq:mmimo:chest:equivalent-channels} and Corollary~\ref{corollary:ls-channel-estimation}, we rewrite~\eqref{eq:mmimo:chest:nmse1} as
\begin{equation}
    \overline{\mathrm{NMSE}}_{\rm{B}, k}(\varpi)\!=\!
    \frac{
        \frac{M}{L}\frac{\sigma_{\rm B}^2}{\rho\tau_p}\!+\!
        \frac{\eta}{L^2}\ltwonorm{\mathbf{G}\left(\sum_{l=1}^{C}\boldsymbol{\Theta}_{\rm P}[l]
        \!-\!
        C\boldsymbol{\hat{\Theta}}_{\mathrm{R}}\right)\mathbf{{r}}_k}^2
    }{
        \frac{\eta}{L^2}\ltwonorm{\mathbf{G}\boldsymbol{\hat{\Theta}}_{\mathrm{R}}\mathbf{{r}}_k}^2
    },
    \label{eq:mmimo:chest:nmse2}
\end{equation}
where the first left-hand side term in the sum of the numerator accounts for the \emph{true \gls{ls} estimation error}, that is, if ${\mathbf{h}}_{\mathrm{R},k}$ was to be estimated without probe distortion, while the second term evaluates the effect of probe distortion. Observe that $\overline{\mathrm{NMSE}}_{\rm{B},k}$ also captures implementation complexity since it depends on the \gls{hris} hardware architectures; it also statistically varies over coherence blocks with factors such as the positions of \glspl{ue} and channel realizations.

\begin{remark}
    From eq.~\eqref{eq:mmimo:chest:nmse2}, we can draw two main conclusions. First, there would be no probe distortion if: a) (obliviously) the probe mode is not employed, \emph{i.e.}, $C=0$ or $\varpi=0$ or b) the probing configurations were identical to the reflection configuration, $\boldsymbol{\Theta}_{\rm P}[l]\!=\!\hat{\boldsymbol{\Theta}}_{\mathrm{R}}, \forall l\in\mathcal{C}$. However, obtaining $\hat{\boldsymbol{\Theta}}_{\mathrm{R}}$ before initiating the probing mode is infeasible, as it represents the primary objective of the probing process itself.\footnote{
        Note that this reveals one potential approach to mitigate probe distortion: implement intelligent probing strategies using outdated location information as a guide, where again if the \glspl{ue} were static and their channels as well (flat-fading), the \gls{hris} could probe less frequently. 
    } Second, the estimation error is maximized when the probe mode occupies the entire \gls{chest} phase, \emph{i.e.}, $C=1$ and $\varpi=1$. From this discussion, an alternative way to measure the probe distortion is
    $
        {
            \left\lVert\frac{1}{C}\sum_{l=1}^{C}\boldsymbol{\Theta}_{\rm P}[l]-\hat{\boldsymbol{\Theta}}_{\mathrm{R}}\right\rVert^{2}_{F}
        }
        /
        {
            \lVert\hat{\boldsymbol{\Theta}}_{\mathrm{R}}\rVert^{2}_{F}
        },
        \label{eq:chest:distortion}
    $
    which measures how different the probing and reflection configurations are on average using a Frobenius norm. However, this metric does not capture the real impact of probe distortion on communication performance, which is addressed next.
    \label{remark:distortion}
\end{remark}
%

\subsection{COMM Phase}
\noindent
During the \gls{comm} phase, the oblivious \gls{bs} exploits the probe-distorted \gls{csi} in Corollary \ref{corollary:ls-channel-estimation} to spatially separate the \glspl{ue} while the \gls{hris} is in the reflection mode. Let $\mathbf{v}_k$ denote the receive combining vector for the $k-$th \gls{ue}, which is a function of the probe-distorted \gls{csi}, $\hat{\mathbf{h}}_k$, for $k\in\mathcal{K}$. Here, we focus on the specific case of the \gls{mr} scheme with $\mathbf{v}_k\!=\!\hat{\mathbf{h}}_k$~\cite{massivemimobook}, as, due to space limitations, alternative choices cannot be thoroughly addressed. By focusing on a particular sample of the $\tau_u$ samples, the \gls{bs} estimates a payload signal sent by the $k-$th \gls{ue} as follows~\cite{massivemimobook}:
\begin{equation}
    \hat{s}_k=\mathbf{v}_k^{\htransp}{\mathbf{h}}_{\mathrm{R},k}s_k +\sum_{\substack{i\in\mathcal{K}, i\neq k}}\mathbf{v}_k^{\htransp}{\mathbf{h}}_{\mathrm{R},i}s_i+\mathbf{v}_k^{\htransp}\mathbf{o},
    \label{eq:estimated-signal}
\end{equation}
where ${\mathbf{h}}_{\mathrm{R},k}$ is defined in \eqref{eq:mmimo:chest:equivalent-channels}, $s_j\sim\mathcal{CN}(0,\rho)$ is a random data signal for $j-$th \gls{ue} with $j\in\mathcal{K}$, and $\mathbf{o}\sim\mathcal{CN}(\mathbf{0},\sigma^2_{\rm B}\mathbf{I}_M)$ is the \gls{bs} receiver noise. 

{We proceed by discussing the impact of the \gls{hris} operation on communication performance more formally, motivated by the autonomous-RIS trade-off introduced in Section~\ref{sec:intro:contributions}.} Let $\bar{\mathbf{h}}_{{\rm P},k}=\frac{1}{L}\sum_{l=1}^{C}\mathbf{{h}}_{\mathrm{P}, k}[l]$. From Corollary~\ref{corollary:ls-channel-estimation}, we rewrite the probe-distorted \gls{csi} estimated at the \gls{bs} as
\begin{equation}
    \hat{\mathbf{h}}_k\sim\mathcal{CN}\left(\bar{\mathbf{h}}_{{\rm P},k}+(1-\varpi)\mathbf{h}_{\mathrm{R}, k},\hat{\sigma}^{2}\mathbf{I}_M\right) \text{ for } k\in\mathcal{K}.
\end{equation}
From this, we can see that the effect of the probe distortion is to shift the mean of the estimated \gls{csi} away from the reflecting equivalent channel, $\mathbf{h}_{\mathrm{R}, k}$. Since the receive combining vector, $\mathbf{v}_k$, is a function of the estimated \gls{csi}, this shift will also inevitably influence it. Consequently, because of the linearity of the \gls{mr} combiner, we can express $\mathbf{v}_k$ as:
\begin{equation}
    \mathbf{v}_k(\varpi) = \bar{\mathbf{v}}_{{\rm P},k} + (1 - \varpi) {\mathbf{v}}_{{\rm R},k},
    \label{eq:distorted-received-combining}
\end{equation}
where $\mathbf{v}_k(\varpi)$ stresses that $\mathbf{v}_k$ is a function of the relative probe duration $\varpi$ (Def.~\ref{definition:varpi}) with $\bar{\mathbf{v}}_{{\rm P},k}$ representing the part of the receive combing vector that is misled by the probe distortion and ${\mathbf{v}}_{{\rm R},k}$ being the desired part from the point of view of correctly spatially separating the \glspl{ue}. Thus, the correspondent instantaneous \gls{ul} \gls{sinr} of~\eqref{eq:estimated-signal} can be written as 
\begin{equation}
    \mathrm{SINR}^{\rm UL}_{k}(\varpi)=\frac{
        \abs{\mathbf{v}_k^{\htransp}{\mathbf{h}}_{\mathrm{R}, k}s_k}^2
        }
        {
        \sum_{\substack{i\in\mathcal{K} \\ i\neq k}}\abs{\mathbf{v}_k^{\htransp}{\mathbf{h}}_{\mathrm{R},i}s_i}^2+\abs{\mathbf{v}_k^{\htransp}\mathbf{o}}^2
        }
\end{equation}
and the instantaneous \gls{ul} \gls{se} can be calculated as
$
    \mathrm{SE}^{\rm UL}_k(\varpi)=\frac{\tau_u}{\tau_\chest+\tau_u}\log_2\left(1+\mathrm{SINR}^{\rm UL}_{k}\right),
$
whose quantities inherent the dependence on $\varpi$ from~\eqref{eq:distorted-received-combining}. We then apply the \gls{uatf} bound to more accurately estimate the \gls{hris}-assisted communication performance, as summarized below and adapted from~\cite[p.~302]{massivemimobook}.

\begin{corollary}[HRIS-assisted communication performance]
    The \gls{ul} \gls{se} of the $k-$th \gls{ue} can be lower bounded on average \gls{wrt} signal/noise realizations as 
    \begin{equation}
        \underline{\mathrm{SE}}^{\rm UL}_k(\varpi)=\frac{\tau_u}{\tau_\chest+\tau_u}\log_2\left(1+\underline{\mathrm{SINR}}^{\rm UL}_{k}(\varpi)\right),\,\text{where}
        \label{eq:mmimo:se-uatf}
    \end{equation}
    \vspace{-6pt}
    \begin{equation}
        \underline{\mathrm{SINR}}^{\rm UL}_{k}(\varpi)\!=\!\frac{\rho \mathbb{E}\left\{\abs{\mathbf{v}_k^{\htransp} {\mathbf{h}}_{\mathrm{R}, k}}^2 \right\} }{ \rho\! \sum_{i=1,i\neq k}^{K} \mathbb{E}\left\{\abs{\mathbf{v}_k^{\htransp} {\mathbf{h}}_{\mathrm{R}, i}}^2\right\} + \mathbb{E}\left\{\abs{\mathbf{v}_k^{\htransp} {\mathbf{o}}}^2\right\}}
    \end{equation}
    for $k\!\in\!\mathcal{K}$. Again, $(\varpi)$ stresses the dependence on these quantities on the relative probe duration inherited from~\eqref{eq:distorted-received-combining}, {acknowledging the effects of implementation complexity, which arise from the different hardware architectures, and of probe distortion, which arises from the probe-distorted \gls{csi} and is influenced by the specific hardware designs.}
    \label{corollary:uatf-ul-se}
\end{corollary}

The above corollary summarizes the HRIS-assisted communication performance under implementation complexity and probe distortion. However, the impact of probe distortion remains sternly hidden. To gain further insights, we extend the analysis to examine how~\eqref{eq:distorted-received-combining} influences the above result, aiming to demonstrate that probe distortion can ultimately reduce communication performance. While the impact manifests in the \gls{sinr}, our focus shifts to the \gls{sir} for convenience. By applying~\eqref{eq:distorted-received-combining} and leveraging the triangle inequality, the \gls{sir} of the $k-$th \gls{ue} is given by
\begin{equation}
    \underline{\mathrm{SIR}}^{\rm UL}_{k}\!\leq\!
    \frac{ 
        \mathbb{E}\!\left\{\abs{\bar{\mathbf{v}}_{\rm{P}, k}^{\htransp}{\mathbf{h}}_{\mathrm{R}, k}}^2\right\} 
        + (1-\varpi)^{2}
        \mathbb{E}\!\left\{\abs{\mathbf{v}_{\rm{R}, k}^{\htransp}{\mathbf{h}}_{\mathrm{R}, k}}^2\right\} 
    }
    {
        \sum_{\substack{i=1\\i\neq k}}^{K} \mathbb{E}\!\left\{\abs{\bar{\mathbf{v}}_{\rm{P}, k}^{\htransp} {\mathbf{h}}_{\mathrm{R}, i}}^2\right\}
        + (1-\varpi)^{2}
        \sum_{\substack{i=1\\i\neq k}}^{K} \mathbb{E}\!\left\{\abs{\mathbf{v}_{\rm{R}, k}^{\htransp} {\mathbf{h}}_{\mathrm{R}, i}}^2\right\}
    }.
    \label{eq:sir1}
\end{equation}
In the ideal case of zero probe distortion, this \gls{sir} is:
\begin{equation}
    \underline{\mathrm{SIR}}^{\rm UL}_{k}\!\leq\!
    { 
        \mathbb{E}\left\{\abs{\mathbf{v}_{\rm{R}, k}^{\htransp}{\mathbf{h}}_{\mathrm{R}, k}}^2\right\} 
    }
    /
    {
        \sum_{\substack{i=1,i\neq k}}^{K} \mathbb{E}\left\{\abs{\mathbf{v}_{\rm{R}, k}^{\htransp} {\mathbf{h}}_{\mathrm{R}, i}}^2\right\}
    }.
    \label{eq:sir2}
\end{equation}
To show that the probe distortion can be \textit{unfavorable}, that is, it \textit{can reduce} the \gls{hris}-assisted communication performance, we compare~\eqref{eq:sir1} to~\eqref{eq:sir2} and obtain the following result.

\begin{corollary}[Unfavorable probe distortion]
    The probe distortion can reduce the numerator of the \gls{sir} in~\eqref{eq:sir1} while simultaneously increasing its denominator. That is, probe distortion can elevate the interference power among \glspl{ue} while decreasing their effective power; reducing, rather than increasing, the overall \gls{sir}, which is the primary motivation of deploying an \gls{hris}. This effect is stochastic, as it depends on factors such as the positions of \glspl{ue}, channel realizations, and implementation complexity (\gls{hris} hardware architecture).
    \label{corollary:effect-of-probing}
\end{corollary}
\begin{proof}
    The proof can be found in Appendix~\ref{appx:proof-effect}.
\end{proof}

The above result along with Corollaries~\ref{corollary:power:performance} and~\ref{corollary:signal:performance} provide us ways to gain insights about the autonomous \gls{ris} trade-off and the underlying robust feasibility region. \textit{Note that we focused on showing that probing distortion can be \textbf{unfavorable} to communication performance, but this does not rule out the possibility of the opposite.}

\subsection{{Non-Autonomous \textit{vs.} Autonomous RISs}}\label{sec:network:non-vsautonomous}
\noindent
We briefly compare non-autonomous and autonomous \glspl{ris}, informed by the operational insights discussed above. Regarding the complexity of the \gls{chest} procedure at the \gls{bs}, following end-to-end \gls{chest} protocols, controlled \glspl{ris} require the \gls{bs} to estimate channel responses for all \gls{bs}-\gls{ris}
and \gls{ris}-\gls{ue} channels to optimize the \gls{ris} configuration~\cite{bjornson2021signalprocessing}. Without leveraging specific channel properties, such as channel sparsity, this involves estimating $K_{\max} (M\!+\!N\!+\!M N)$ channel responses (see~\cite{Wang2020:ce} for details). Alternatively, for \glspl{hris}, the oblivious \gls{bs} only needs to estimate $K_{\max} M$ channel responses, as indicated in Corollary~\ref{corollary:ls-channel-estimation}, resulting in significant computational savings for the \gls{bs}. In terms of overhead represented by $\tau_\chest$ in Corollary~\ref{corollary:uatf-ul-se}, an \gls{hris} requires the transmission of $L K_{\max}$ pilot samples due to pilot repetition, as detailed in Section~\ref{sec:framework:detailed}. In contrast, a controlled \gls{ris} requires dedicated and explicit control to receive its reflection configuration from the \gls{bs}, adding control overhead beyond that already needed for \gls{csi} acquisition~\cite{Zappone2021, Saggese2023,Saggese2023:ris-mec}. To model this as simply yet comprehensively as possible, we consider that
$
    \tau_\chest\! = \! 2K_{\max} + N + {N}/{R},
$ where $2K_{\max}\! +\! N$ represents the minimal number of pilot samples needed for end-to-end \gls{chest} estimation, as specified in~\cite{Wang2020:ce}, and $N/R$ models a basic yet unrealistic control channel capable of transmitting phase shifts with infinite precision and no errors at a rate $R\!>\!0$, measured in phase shifts per sample, similar to the model in~\cite{Zappone2021}. In the case of $R\!\xrightarrow[]{}\!\infty$, we have \textit{ideal control} with zero errors and overhead. In simple terms, the potential advantage of autonomous \gls{ris} over its non-autonomous counterpart lies in achieving $L K_{\max}\!>\!2K_{\max}\! + \!N \!+\! N/R$ along with comparing the respective assisted communication performances. For large $N$, the first condition is readily met. In the next section, we demonstrate that gains can still be obtained even under less favorable conditions for autonomous \glspl{ris}, specifically for $N$ in the few dozens.

\section{Experiments and Discussion}\label{sec:results}
\noindent
We numerically evaluate and discuss the fundamental trade-offs posed by autonomy, as defined in Section~\ref{sec:intro:contributions}.\footnote{Simulation code is available online at \href{https://github.com/victorcroisfelt/self-configuring-orchestration}{this link}.\label{foot:code}} Table~\ref{tab:parameters} reports the simulation parameters used, motivated by a suburban setting that uses the \gls{hris} to \textit{extend coverage} to \glspl{ue} in cell-edge conditions~\cite{massivemimobook,DiRenzo2020}. The \gls{hris} is located at the origin of a two-dimensional Cartesian system. The \gls{bs} is placed at the second quadrant $1$ km away from the \gls{hris} at $135^{\circ}$. The \glspl{ue} are randomly placed within a ring at the first quadrant with a respective inner and outer radius of $900$~m and $1$~km, representing a cell-edge condition. The \gls{bs} receiver noise $\sigma^2_{\rm B}\!=\!-94$~dBm comprises the thermal noise over $B\!=\!20$~MHz and a noise figure of $7$~dB in the receiver hardware; whereas, the \gls{hris} hardware has worse quality with a noise figure of~$10$ dB, yielding~$\sigma_{\rm H}^{2}\!=\!-91$~dBm. The above choices ensure strong enough \gls{bs}-\gls{ue} channels, assuring their spatial separability.

\begin{table*}[!t]
    \caption{
        Simulation parameters.
    }
    \label{tab:parameters}
    \centering
    \resizebox{\textwidth}{!}{%
    \scriptsize
    \begin{tabular}{l!{\color{white}\vrule}c|l!{\color{white}\vrule}c|l!{\color{white}\vrule}c}
        \toprule
        \textbf{Parameter} & \textbf{Value} & \textbf{Parameter} & \textbf{Value}  & \textbf{Parameter} & \textbf{Value}\\  
        \midrule
        \rowcolor[HTML]{EFEFEF}
        Carrier frequency, $f_c$ & $28$ GHz & Coupling parameter, $\eta$ & $0.999$ & Max. number of \glspl{ue}, $K_{\max}$ & $4$ \\
        \gls{bs} pathloss, $\beta_{\rm B}$ & $3.76$ & Receiver noise powers, $\sigma^2_{\rm H},\sigma^2_{\rm B}$ & $-91$,$-94$ dBm & Coherence block length, $\tau_c$ & $128$ samples \\
        \rowcolor[HTML]{EFEFEF}
        \gls{hris} pathloss, $\beta_{\rm H}$ & $2$ & \gls{ul} transmit power, $\rho$ & $0$ dBm & Number of pilots, $\tau_p$ & $K_{\rm max}$ \\
        Channel power gain, $\gamma_0$ & $1$ & Number of \gls{bs} antennas, $M$ & $64$ & Number of pilot subblocks, $L$ & 16\\
        \rowcolor[HTML]{EFEFEF}
        \gls{nlos} relative powers, $\sigma^2_{\rm DR}$,\,$\sigma^2_{\rm RR}$ & $90.8\times10^{-9},\,0.11\times10^{-6}$ & Number of \gls{hris} elements, $N$ & $32$ & Phase durations, $\tau_\chest,\tau_\comm=\tau_u$ & $64,64$ \\
        \bottomrule
    \end{tabular}
    }
\end{table*}
%

\subsection{Implementation Complexity Trade-Off}\label{sec:results:hris}

\paragraph*{\textbf{Probing performance}} For this evaluation, we assume that each \gls{ue} has a 50\% probability of being scheduled within a given coherence block. Figure~\ref{fig:results:hris-probe-performance} shows the probing performance in terms of the probability of detection, $P_{\rm D}$, for different choices of probabilities of false alarm, $P_{\rm FA}$. We use Corollaries~\ref{corollary:power:performance} and~\ref{corollary:signal:performance} to determine threshold values, and Monte Carlo simulations with $10^4$ realizations to obtain the curves. Naturally, we observe degradation in $P_{\rm D}$ with a decrease in the power absorbed by the \gls{hris}, $1\!-\!\eta$, or a decrease in the relative probe duration, $\varpi$ (see Def.~\ref{definition:varpi}). Furthermore, as expected, the \gls{dsp}-enabled \gls{hris} outperforms the \gls{pd}-enabled counterpart; however, this performance gap narrows as $1\!-\!\eta$ increases. Notably, we also note fluctuations in the performance of the \gls{pd}-enabled \gls{hris}, which arise from our design choice to alter the probing configuration codebook as a function of $C$ or $\varpi\!=\!C/L$, as indicated in~\eqref{eq:hris:probing:power:sweeping}. Based on Fig.~\ref{fig:results:hris-probe-performance}, we select $\eta\!=\!0.999$ and $P_{\rm FA}\! =\! 10^{-2}$ for the next simulations, enabling us to focus on evaluating the reflection mode with a satisfactory probing performance{, which is around $P_{\rm D}\!=\!93.14$\% for the \gls{pd}- and $P_{\rm D}\!=\!99.57$\% for the \gls{dsp}-enabled \gls{hris}.} Hence, when manufacturing the \gls{hris}, the choice of the coupling parameter $\eta$ can be aligned with the desired performance for both the probe and reflection modes.

\paragraph*{\textbf{Reflecting performance}}
To isolate the reflecting performance to not depend on scheduling, we consider $K\!=\!2$ \glspl{ue} that are \textit{always} scheduled. Figure~\ref{fig:results:hris-reflection-performance} shows the reflecting performance in terms of the \gls{hris}-\gls{ue} channel gain. Here, the selection of $C\!=\!8$ indicates that the probe mode occupies half of the \gls{chest} phase, $\varpi\!=\!0.5$. As anticipated, the \gls{dsp}-enabled \gls{hris} demonstrates superior performance compared to its \gls{pd}-enabled counterpart, as it is more effective in localizing the \glspl{ue} and reflecting energy toward them, resulting in higher average channel gains. In terms of the $\mathrm{NMSE}_{\rm H}$, defined in~\eqref{eq:hris:reflection:nmse}, the \gls{pd}- achieves $0.86$ while $1.72\times10^{-6}$ is achieved by the \gls{dsp}-enabled \gls{hris}, showing that \gls{dsp} capabilities plays a huge difference in getting more accurate local \gls{csi}.

\begin{figure}[!t]
    \centering
\begin{tikzpicture}

\definecolor{darkorange25512714}{RGB}{255,127,14}
\definecolor{forestgreen4416044}{RGB}{44,160,44}
\definecolor{lavender233}{RGB}{233,233,233}
\definecolor{steelblue31119180}{RGB}{31,119,180}

\def\shift{-0.8cm}
\def\sep{1cm}
\def\vside{2.2cm}

\begin{groupplot}[
    group style={
        group name=performance,
        group size=2 by 1,
        horizontal sep=0.4in,
        ylabels at=edge left
    }, 
    height=0.7in,
    width=1.25in,
    scale only axis,
    tick align=outside,
    tick pos=left,
    legend style={
        at={(1.15,1)},
        anchor=south,
        draw=none,
        fill=none,
        legend cell align=center,
        align=center
    },
    legend columns=-1,
    ylabel={
        Proba. of detection , \(\displaystyle P_{\rm D}\)
    },
    xmajorgrids,
    ymajorgrids,
]

\nextgroupplot[
    log basis x={10},
    tick align=outside,
    tick pos=left,
    xlabel={Fraction of pow. absorbed, \(\displaystyle 1\!-\!\eta\)},
    xmajorgrids,
    xmin=1e-04, xmax=0.1,
    xmode=log,
    xtick={1e-06,1e-05,0.0001,0.001,0.01,0.1,1,10},
    xticklabels={
        \(\displaystyle {10^{-6}}\),
        \(\displaystyle {10^{-5}}\),
        \(\displaystyle {10^{-4}}\),
        \(\displaystyle {10^{-3}}\),
        \(\displaystyle {10^{-2}}\),
        \(\displaystyle {10^{-1}}\),
        \(\displaystyle {10^{0}}\),
        \(\displaystyle {10^{1}}\)
    },
    ymin=0.575, ymax=1.025,
    ytick={0.6,0.7,0.8,0.9,1},
    yticklabels={
    \(\displaystyle {0.6}\),
    \(\displaystyle {0.7}\),
    \(\displaystyle {0.8}\),
    \(\displaystyle {0.9}\),
    \(\displaystyle {1.0}\)
    }
]
\addplot [darkorange25512714, mark=square*, mark size=1.5, mark options={solid}, forget plot]
table {%
    0.1000 0.9996
    0.0100 0.9984
    0.0010 0.9314
    0.0001 0.6843
};
\addplot [forestgreen4416044, dashed, mark=square*, mark size=1.5, mark options={solid}, forget plot]
table {%
    0.1000 1.0000
    0.0100 0.9968
    0.0010 0.8900
    0.0001 0.6005
};
\addplot [darkorange25512714, mark=diamond*, mark size=1.5, mark options={solid}, forget plot]
table {%
    0.1000 0.9957
    0.0100 0.9956
    0.0010 0.9957
    0.0001 0.9957
};
\addplot [forestgreen4416044, dashed, mark=diamond*, mark size=1.5, mark options={solid}, forget plot]
table {%
    0.1000 0.9996
    0.0100 0.9995
    0.0010 0.9996
    0.0001 0.9996
};

\addlegendimage{ycomb, gray, mark=square*, mark size=1.5, mark options={solid,fill=gray}}
\addlegendentry{\gls{pd}-enabled}

\addlegendimage{ycomb, gray, mark=diamond*, mark size=1.5, mark options={solid,fill=gray}}
\addlegendentry{\gls{dsp}-enabled}

\addlegendimage{darkorange25512714}
\addlegendentry{$P_\mathrm{FA} = 10^{-2}$}

\addlegendimage{forestgreen4416044, dashed}
\addlegendentry{$P_\mathrm{FA} = 10^{-3}$}

\nextgroupplot[
    xlabel={Rel. probe duration, \(\displaystyle \varpi\)},
    xmin=0, xmax=1,
    xtick={
        0,
        0.2,
        0.4,
        0.6,
        0.8,
        1
    },
    xticklabels={
        \(\displaystyle {0.0}\),
        \(\displaystyle {0.2}\),
        \(\displaystyle {0.4}\),
        \(\displaystyle {0.6}\),
        \(\displaystyle {0.8}\),
        \(\displaystyle {1.0}\)
    },
    ymin=0.675, ymax=1.025,
    ytick style={color=black},
    ytick={0.7,0.8,0.9,1},
    yticklabels={
        \(\displaystyle {0.7}\),
        \(\displaystyle {0.8}\),
        \(\displaystyle {0.9}\),
        \(\displaystyle {1.0}\)
    }
]
\addplot [darkorange25512714, mark=square*, mark size=1.5, mark options={solid}]
table {%
    0.0625 0.7791
    0.1250 0.8374
    0.1875 0.8627
    0.2500 0.8791
    0.3125 0.9145
    0.3750 0.9072
    0.4375 0.9522
    0.5000 0.9314
    0.5625 0.9729
    0.6250 0.9688
    0.6875 0.9755
    0.7500 0.9852
    0.8125 0.9946
    0.8750 0.9931
    0.9375 0.9920
    1.0000 0.9958
};
\addplot [forestgreen4416044, dashed, mark=square*, mark size=1.5, mark options={solid}]
table {%
    0.0625 0.7061
    0.1250 0.7812
    0.1875 0.8133
    0.2500 0.8299
    0.3125 0.8758
    0.3750 0.8588
    0.4375 0.9229
    0.5000 0.8900
    0.5625 0.9480
    0.6250 0.9456
    0.6875 0.9571
    0.7500 0.9679
    0.8125 0.9909
    0.8750 0.9864
    0.9375 0.9833
    1.0000 0.9893
};
\addplot [darkorange25512714, mark=diamond*, mark size=1.5, mark options={solid}]
table {%
    0.0625 0.9957
    0.1250 0.9957
    0.1875 0.9957
    0.2500 0.9957
    0.3125 0.9958
    0.3750 0.9957
    0.4375 0.9957
    0.5000 0.9957
    0.5625 0.9957
    0.6250 0.9956
    0.6875 0.9958
    0.7500 0.9957
    0.8125 0.9956
    0.8750 0.9956
    0.9375 0.9956
    1.0000 0.9956
};
\addplot [forestgreen4416044, dashed, mark=diamond*, mark size=1.5, mark options={solid}]
table {%
    0.0625 0.9996
    0.1250 0.9996
    0.1875 0.9996
    0.2500 0.9996
    0.3125 0.9996
    0.3750 0.9996
    0.4375 0.9996
    0.5000 0.9996
    0.5625 0.9996
    0.6250 0.9996
    0.6875 0.9996
    0.7500 0.9995
    0.8125 0.9996
    0.8750 0.9996
    0.9375 0.9995
    1.0000 0.9995
};
\end{groupplot}

\node[text width=.49\columnwidth, align=center, anchor=north] at ([yshift=\shift]performance c1r1.south) {\subcaption{
    $\varpi=0.5$ or $C=8$.
}};


\node[text width=.49\columnwidth, align=center, anchor=north] at ([yshift=\shift]performance c2r1.south) {\subcaption{
    $\eta=0.999$.
}};

\end{tikzpicture}
    \vspace{-18pt}
    \caption{
        Numerical comparison of the probing performance regarding implementation complexity for the \gls{pd}- and \gls{dsp}-enabled \gls{hris} hardware architectures. We vary (a) the fraction of power absorbed by the \gls{hris}, $1\!-\!\eta$, and (b) {the level of probe distortion} via the relative probe duration, $\varpi$ (Def.~\ref{definition:varpi}). We evaluate different choices of the probability of false alarm $P_{\rm FA}$ for $K\!=\!4$ \glspl{ue} with each having a probability of 50\% to be scheduled on each realization.
    }
    \label{fig:results:hris-probe-performance}
    \vspace{-16pt}
\end{figure}
\begin{figure}[!t]
    \centering
\begin{tikzpicture}

\definecolor{darkorange25512714}{RGB}{255,127,14}
\definecolor{forestgreen4416044}{RGB}{44,160,44}
\definecolor{steelblue31119180}{RGB}{31,119,180}

\def\shift{-0.7cm}
\def\sep{0.3cm}
\def\vside{2.2cm}


\begin{groupplot}[
    group style={
        group name=hris,
        group size=2 by 1,
        horizontal sep=0.1in,
        xlabels at=edge bottom,
        y descriptions at=edge left
    }, 
    height=0.7in,
    width=1.25in,
    scale only axis,
    title style={at={(0.5,0.82)}},
    tick style={color=black},
    tick align=outside,
    tick pos=left,
    xlabel={
        UE angular position [$^{\circ}$]
    },
    xmajorgrids,
    xmin=0, xmax=90,
    xtick style={color=black},
    xtick={0,30,60,90,120,150,180},
    xticklabels={
        \(\displaystyle {0}\),
        \(\displaystyle {30}\),
        \(\displaystyle {60}\),
        \(\displaystyle {90}\),
    },
    ylabel={
        HRIS-UE ch. gain
    },
    ymajorgrids,
    ymin=0, ymax=500,
    ytick={0,100,200,300,400,500},
    yticklabels={
        \(\displaystyle {0}\),
        \(\displaystyle {100}\),
        \(\displaystyle {200}\),
        \(\displaystyle {300}\),
        \(\displaystyle {400}\),
        \(\displaystyle {500}\)
    }
]
]

\nextgroupplot[]
\addplot [thick, black, forget plot]
table {%
0.0000 15.1716
0.9000 15.2705
1.8000 15.5634
2.7000 16.0380
3.6000 16.6710
4.5000 17.4234
5.4000 18.2355
6.3000 19.0202
7.2000 19.6582
8.1000 19.9952
9.0000 19.8472
9.9000 19.0152
10.8000 17.3191
11.7000 14.6541
12.6000 11.0746
13.5000 6.9072
14.4000 2.8814
15.3000 0.2586
16.2000 0.9181
17.1000 7.3446
18.0000 22.4493
18.9000 49.1624
19.8000 89.7683
20.7000 145.0262
21.6000 213.2218
22.5000 289.4157
23.4000 365.2450
24.3000 429.6312
25.2000 470.5939
26.1000 478.0381
27.0000 446.9235
27.9000 379.7985
28.8000 287.5225
29.7000 187.3319
30.6000 98.2943
31.5000 35.4148
32.4000 4.6526
33.3000 1.2180
34.2000 12.3535
35.1000 23.6378
36.0000 25.7648
36.9000 18.1243
37.8000 7.1327
38.7000 0.4999
39.6000 1.4060
40.5000 6.7153
41.4000 10.3395
42.3000 8.7369
43.2000 3.7292
44.1000 0.2080
45.0000 1.0764
45.9000 4.4863
46.8000 6.1455
47.7000 4.1372
48.6000 0.9036
49.5000 0.1571
50.4000 2.3792
51.3000 4.3699
52.2000 3.4327
53.1000 0.8603
54.0000 0.1071
54.9000 2.0176
55.8000 3.6689
56.7000 2.5867
57.6000 0.3877
58.5000 0.3925
59.4000 2.5472
60.3000 3.4094
61.2000 1.5501
62.1000 0.0020
63.0000 1.4549
63.9000 3.6276
64.8000 2.8084
65.7000 0.3319
66.6000 0.8032
67.5000 4.0260
68.4000 4.5315
69.3000 1.1594
70.2000 0.5522
71.1000 6.0088
72.0000 9.3426
72.9000 3.6486
73.8000 0.9155
74.7000 22.8421
75.6000 69.4223
76.5000 108.5507
77.4000 107.3859
78.3000 68.4548
79.2000 24.7493
80.1000 2.6526
81.0000 0.5704
81.9000 2.8994
82.8000 1.9127
83.7000 0.1761
84.6000 0.2231
85.5000 0.6874
86.4000 0.3575
87.3000 0.0041
88.2000 0.1468
89.1000 0.2391
90.0000 0.0658
};

\addplot [ycomb, red, dotted, very thick]
table {%
30  0
30  500
75  0
75  500
};




\nextgroupplot[ymajorticks=false]
\addplot [thick, black]
table {%
0.0000 1.5224
0.9000 1.4767
1.8000 1.3429
2.7000 1.1307
3.6000 0.8588
4.5000 0.5567
5.4000 0.2690
6.3000 0.0582
7.2000 0.0079
8.1000 0.2228
9.0000 0.8250
9.9000 1.9429
10.8000 3.6906
11.7000 6.1368
12.6000 9.2624
13.5000 12.9122
14.4000 16.7506
15.3000 20.2411
16.2000 22.6732
17.1000 23.2681
18.0000 21.3856
18.9000 16.8382
19.8000 10.2788
20.7000 3.5744
21.6000 0.0186
22.5000 4.1963
23.4000 21.3255
24.3000 56.0005
25.2000 110.4636
26.1000 182.8064
27.0000 265.7562
27.9000 346.8022
28.8000 410.2025
29.7000 440.8314
30.6000 428.9819
31.5000 374.4605
32.4000 288.0599
33.3000 189.1400
34.2000 99.5858
35.1000 36.2577
36.0000 5.2314
36.9000 0.6977
37.8000 9.2386
38.7000 17.2651
39.6000 17.4928
40.5000 10.8711
41.4000 3.3134
42.3000 0.0244
43.2000 1.6658
44.1000 4.8486
45.0000 5.8284
45.9000 3.8065
46.8000 1.0174
47.7000 0.0108
48.6000 1.1090
49.5000 2.4932
50.4000 2.4484
51.3000 1.1349
52.2000 0.0792
53.1000 0.2626
54.0000 1.1216
54.9000 1.4444
55.8000 0.8628
56.7000 0.1341
57.6000 0.0756
58.5000 0.6014
59.4000 0.9616
60.3000 0.7108
61.2000 0.1819
62.1000 0.0160
63.0000 0.4093
63.9000 0.9087
64.8000 0.9131
65.7000 0.3702
66.6000 0.0013
67.5000 0.6719
68.4000 2.2033
69.3000 2.8955
70.2000 1.3375
71.1000 0.1102
72.0000 7.2214
72.9000 30.3780
73.8000 66.1671
74.7000 95.8958
75.6000 97.9809
76.5000 68.7389
77.4000 28.6801
78.3000 3.7404
79.2000 0.9312
80.1000 6.5003
81.0000 6.5195
81.9000 1.7634
82.8000 0.1112
83.7000 2.4509
84.6000 3.3100
85.5000 1.1819
86.4000 0.0209
87.3000 1.3896
88.2000 2.1989
89.1000 0.9048
90.0000 0.0044
};

\addplot [ycomb, red, dotted, very thick]
table {%
30  0
30  500
75  0
75  500
};



\end{groupplot}


\end{tikzpicture}
    \vspace{-6pt}
    \caption{
        Qualitative comparison of the reflecting performance regarding implementation complexity for the (\textit{left}) \gls{pd}- and (\textit{right}) \gls{dsp}-enabled \gls{hris} hardware architectures. We assume $K\!=\!2$ always-scheduled \glspl{ue} for $\eta\!=\!0.999$, $P_{\rm FA}\!=\!10^{-2}$, $K_{\max}\!=\!4$, and $\varpi\!=\!0.5$ or $C\!=\!8$. To enhance visualization, the \gls{bs} is placed $1$ km from the normal line to the \gls{hris}. The `\textcolor{red}{$\cdots$}' lines represent the $2$ \glspl{ue} positioned at $(d_k,\theta_k)$:$(10 \text{ m}, 30^{\circ})$ and $(20 \text{ m},75^{\circ})$.
    }
    \label{fig:results:hris-reflection-performance}
    \vspace{-12pt}
\end{figure}

\paragraph*{\textbf{Overall \gls{hris} performance}} 
We now summarize the key findings regarding the trade-off under evaluation. Based on the complexity analysis in Section~\ref{sec:hris:complexity}, the \gls{pd}-enabled \gls{hris} requires $308$ element-wise operations, while the \gls{dsp}-enabled variant requires $3264$ for $\eta\!=\!0.999$, $P_{\rm FA}\!=\!10^{-2}$, and $\varpi\!=\!0.5$. Hence, the \gls{pd}-enabled \gls{hris} could achieve a computational saving of approximately $90.56$\%, where likely similar gains are expected in capital costs and energy consumption (this very much depends on how the analog circuitry of the \gls{pd}-enabled \gls{hris} is implemented). Although the detection performance difference between the two architectures is only $6.43$\%, the \gls{dsp}-enabled \gls{hris} offers significantly higher quality in local \gls{csi} at the \gls{hris}. In this part, however, we have focused on evaluating the operation modes in isolation. Next, we will assess a more practical scenario of interest that considers the impact of the \gls{hris} operation on communication performance. 

\subsection{Autonomous RIS Trade-Off}\label{sec:results:ablation}

\paragraph*{\textbf{Baselines}} We consider three baselines. \textbf{Standalone} refers to an \gls{mmimo} system that operates independently, without the assistance of the \gls{hris}. \textbf{Informed \gls{bs}} represents an informed \gls{bs} employing the stop-and-wait approach to avoid probe distortion while the \gls{hris} is ideal with perfect probing and reflection performance. Also, we do not account for additional overhead needed for the \gls{bs} to be aware of the \gls{hris} operation. \textbf{Controlled RIS} denotes the non-autonomous \gls{ris} paradigm characterized by the control overhead outlined in Section~\ref{sec:network:non-vsautonomous}, where $R\!\xrightarrow[]{}\!\infty$ signifies ideal control while $R\!=\!1$ indicates that one phase shift can be transmitted per sample with infinite precision and no errors. Also, we assume that the \gls{bs} has perfect \gls{csi}. \textit{Note:} It is important to recognize that the Informed \gls{bs} and Controlled \gls{ris} serve as highly optimistic baselines and their comparison with \gls{hris}-assisted performance is invariably unfair; but even so the latter show comparable performance while completely avoiding the need of dedicated, explicit control, as seen next.

\paragraph*{\textbf{Impact of the probe distortion under different levels of implementation complexity}} Figure~\ref{fig:results:network-performance} shows the performance of an \gls{hris}-assisted \gls{mmimo} system in terms of the quality of the probe-distorted \gls{csi} estimated at the \gls{bs}, given in Corollary~\ref{corollary:ls-channel-estimation}, and the \gls{se}, given in Corollary~\ref{corollary:uatf-ul-se}. To isolate the effect of probe distortion, we assume that the $K\!=\!4$ \glspl{ue} are always scheduled. The level of probe distortion can be controlled by increasing the probe relative duration, $\varpi$ (Def.~\ref{definition:varpi}). The higher $\varpi$, the better the \gls{hris} probe performance, but the worse the quality of the \gls{csi} obtained at the \gls{bs}. As expected, this is readily seen in the \gls{nmse} curves shown in Fig.~\ref{fig:results:network-performance} (left). From the \gls{se} curves, we observe that the Informed \gls{bs} achieves 0.3254 bits/s/Hz/UE, while the highest \glspl{se} for the \gls{pd}- and the \gls{dsp}-enabled \gls{hris} are 0.3251 and 0.3254 bits/s/Hz/UE, respectively, achieved at $\varpi\!=\!0.625$ ($C\!=\!10$) and $\varpi=0.0625$ ($C\!=\!1$). \textit{The negligible performance gap demonstrates that keeping the \gls{bs} oblivious of \gls{hris} operations does not significantly impact network performance when the proposed orchestration framework is applied.}

\begin{figure*}[!t]
    \centering
    \input{figs/figure9}
    \vspace{-6pt}
    \caption{
        Performance of an \gls{hris}-assisted \gls{mmimo} system with $K=4$ always-scheduled \glspl{ue} for $\eta\!=\!0.999$ and $P_{\rm FA}\!=\!10^{-2}$. Adjusting the relative probe duration, $\varpi$, allows us to control probe distortion, with $\varpi=1$ indicating maximum distortion. The two different hardware architectures characterize the two extremes of implementation complexity: \gls{pd}- is the lowest while \gls{dsp}- is the highest. For $\varpi\!=\!0$, the \gls{hris}-related curves have performance equal to that of the standalone \gls{mmimo} system, as this virtually means that the \gls{hris} is turned off.
    }
    \label{fig:results:network-performance}
\end{figure*}

\paragraph*{\textbf{Dual effect of probe distortion}} 
As outlined in Section~\ref{sec:intro}, our goal is to highlight the effects of probe distortion and explore its potential impact on \gls{hris}-assisted communication performance. Figure~\ref{fig:results:network-performance} shows that, while probe distortion harms the quality of \gls{csi} acquisition at the \gls{bs}, it only slightly influences \gls{se} performance. Specifically, even though the \gls{se} achieved by the \gls{dsp}-enabled \gls{hris} decreases when $\varpi>0.5$, it never falls below the performance of the Standalone system. \textit{This leads to the counterintuitive observation: in some cases, probe distortion may be \textbf{favorable}, that is, it can preserve or improve communication performance, even with the \gls{csi} quality at the \gls{bs} deteriorating.} One possible explanation for this phenomenon is that probe distortion may introduce diversity among the \gls{ue} channels without compromising their identity, thereby reducing interference. This is similar to the effect induced by spatial correlation~\cite{massivemimobook} and channel rank enhancement~\cite{DiRenzo2020}. Another explanation that supports the former is that we are analyzing a cell-edge condition, where \gls{csi} quality does not matter much as the received power is very low. Furthermore, probe distortion has a more significant impact on the \gls{dsp}-enabled \gls{hris} \gls{se} performance than on the \gls{pd}-enabled \gls{hris}, where \gls{se} is not affected by increases in $\varpi$. A possible explanation is that the \gls{pd}-enabled \gls{hris} produces a broader reflecting beam, distributing energy more evenly across the space; conversely, the \gls{dsp}-enabled \gls{hris} further narrows the energy focusing, leading to more prominent errors in \gls{csi} acquisition. \textit{Interestingly, our results suggest that architectures and algorithms that depend on lower \gls{dsp} capabilities can be advantageous in the presence of favorable probe distortion and in scenarios where \gls{csi} quality has a lower impact.} However, a more detailed understanding of this dual phenomenon of favorable and unfavorable probing distortion is required, as it is highly dependent on multiple factors, such as the \gls{hris} hardware architecture, the receive combining scheme, and the deployment setting. From~Fig.~\ref{fig:results:network-performance}, the \textit{robust feasibility region} can be visually characterized by finding values of $\varpi$ in which the \gls{hris}-related \gls{se} curves perform better than the Standalone baseline. For the cell-edge setting, the region is $\varpi\!\in\![0.0625, 1]$, with gains up to $19.06$\% and $20.29$\% on average for the \gls{pd}- and \gls{dsp}-enabled \glspl{hris}, respectively. We report that this region eventually narrows as \glspl{ue} are brought closer to the \gls{bs}.

\paragraph*{\textbf{Autonomous-vs-non-autonomous \glspl{ris}}} By comparing the performance of the controlled RIS with that of the \gls{hris}, we observe that the \gls{hris} generally performs worse than the controlled RIS under ideal control conditions ($R\!\rightarrow\!\infty$). But, when accounting for control overhead, the \gls{hris} can offer comparable or even superior performance depending on the value of $R$. Notably, this advantage comes with the benefit of not requiring dedicated, explicit control, which can be more costly than manufacturing and designing the \gls{hris} itself. 

\section{Conclusions}\label{sec:conclusions}
\noindent
We proposed a \gls{phy}-layer orchestration framework that aligns \gls{hris} operation modes with \gls{mmimo} operation phases, enabling the study of fundamental trade-offs concerning two major challenges posed by \glspl{ris} featuring autonomy: implementation complexity and probe distortion. As stringent conditions present in our analysis, we consider (a) two extremes of implementation complexity, realized by minimal \gls{hris} operation designs over the \gls{pd}- and \gls{dsp}-enabled \gls{hris} hardware architectures, and (b) an oblivious \gls{bs} that fully embraces probe distortion. Regarding the implementation complexity trade-off, our results showed that the more complex \gls{dsp}-enabled \gls{hris} has clearly better local \gls{csi} quality, but the \gls{pd}-enabled \gls{hris} can counterintuitively outperform it in terms of communication performance due to \textit{more favorable} probe distortion when supporting cell-edge \glspl{ue}. Regarding the autonomous \gls{ris} trade-off, we observed that unfavorable probe distortion can degrade \gls{hris}-assisted communication performance, potentially making autonomous \glspl{ris} unfeasible if not properly designed. However, we also observed a \textit{dual effect of probe distortion}, which can be favorable or unfavorable depending on several factors. Further research into the statistical properties of probe distortion is necessary to better understand this dual phenomenon. For example, we have conducted preliminary simulations that show that probe distortion can behave differently depending on the receive combining scheme being used; you can use our simulation platform to test it yourself for the \gls{zf} scheme.

In summary, we presented empirical evidence that an \gls{hris}-assisted \gls{mmimo} system can outperform standalone \gls{mmimo} and controlled \gls{ris} systems even under stringent conditions. Future research can expand this analytical framework to scenarios where the \gls{hris} supports multiple operators or \glspl{bs}. Additionally, it could incorporate performance analysis of hybrid controlled/autonomous \glspl{ris}, where some explicit control messages guide \gls{hris} behavior in some coherence blocks while allowing autonomous operation in others.

\appendices

\section{Proof of Corollary \ref{corollary:power:performance}}\label{appx:proof-power}
\begin{proof}
    We need to get the distributions of the numerator and the denominator of the left-hand side term of~\eqref{eq:hris:probing:power:general-glrt}. We start with the denominator. For the null-hypothesis in~\eqref{eq:hris:probing:power:hypothesis} with $A_k[c]\!=\!0$, $\alpha_k[c]\!=\!|n_t[c]|^{2}$ is distributed as an exponential distribution. Specifically,
    $p(\alpha_{t}[c];\mathcal{H}^{(k)}_0[c])\!=\!\mathrm{Exp}({1}/(2{N\sigma^2_\mathrm{H}}))$,
    where $\sigma^2_{\rm H}$ is a known nuisance parameter. For the numerator, the signal under $\mathcal{H}^{(k)}_1$ is approximated as $\alpha_{t}[c] \!\approx\!\abs{A_{k}[c]}^2 + \abs{{{n}}_{t}[c]}^2$, motivated by analyzing the signal on expectation, resulting in $2\Re\{A_{k}[c] {n}_{t}[c]\}$ being $0$ since the noise has zero mean. Note that its variance is still preserved in the term $\abs{{{n}}_{t}[c]}^2$. This approximation will surely cause an overestimation of the performance since we ignore the cross-term mixing amplitude and noise. Another motivation for such an approximation is to note that the terms $\abs{A_{k}[c]}^2$ and $\abs{{{n}}_{t}[c]}^2$ would be higher in magnitude than $2\Re\{A_{k}[c] {n}_{t}[c]\}$, where for high \gls{snr} values $\abs{A_{k}[c]}^2$ dominates; in contrast, $\abs{{{n}}_{t}[c]}^2$ dominates in low \gls{snr}. Hence, the numerator of~\eqref{eq:hris:probing:power:general-glrt} is distributed as
    $ 1/({2N\sigma^2_\mathrm{H}})\exp(-{1}/({2N\sigma^2_\mathrm{H}})\left(\alpha-{f}_\mathrm{LS}(A_{k}[c])\right))$.
    By the above and~\eqref{eq:hris:probing:power:general-glrt}, the \gls{hris} decides that the $k-$th \gls{ue} is detected in the $c-$th pilot subblock if
    $ \alpha_{t}[c]\!\gtrsim\!{2N\sigma^2_\mathrm{H}}\epsilon_s\!=\!\epsilon^\prime_s$.
\end{proof}

\section{Proof of Corollary~\ref{corollary:effect-of-probing}}\label{appx:proof-effect}
\begin{proof}
    Let the four terms that compose the $\underline{\mathrm{SIR}}^{\rm UL}_{k}$ in \eqref{eq:sir1} be referred to as:
    $ \underline{\mathrm{SIR}}^{\rm UL}_{k}\!=\! ( a +\varpi^{2} b )/ (  c +\varpi^{2} d  )
    $.
    To support our claim that probing distorting can be detrimental, we need to show that $a\!\leq \!b$ while $c\!\geq\! d$ for arbitrary choices of $\bar{\mathbf{v}}_{\rm{P},k}$, $\bar{\mathbf{v}}_{\rm{R},k}$, ${\mathbf{h}}_{\rm{R},k}$, ${\mathbf{h}}_{\rm{R},i}$. We must work with at least $M\!\geq\! 2$. For the sake of argument, we choose $\bar{\mathbf{v}}_{\rm{P},k}\!=\![0, 1]^{\transp}$, ${\mathbf{v}}_{\rm{R},k}\!=\![1, 0]^{\transp}$, ${\mathbf{h}}_{\rm{R},k}\!=\![0, 1]^{\transp}$, and ${\mathbf{h}}_{\rm{R},i}\!=\![1, 0]^{\transp}$. This yields in $a\!=\!0$, $b\!=\!1$, $c\!=\!1$, and $d\!=\!0$.
\end{proof}


\bibliographystyle{IEEEtran}


\end{document}